\newif\iffull
\newif\ifarxivsource
\newcommand{\removelatexerror}{\let\@latex@error\@gobble}
\definecolor{darkgreen}{rgb}{0,0.6,0}
\newcommand{\ignore}[1]{}
\newcommand{\smath}[1]{\ensuremath{#1}\xspace}
\newcommand{\TMstyle}[1]{\smath{\mathcal{#1}}}
\newcommand{\methodstyle}[1]{\smath{\textsc{#1}}}
\newcommand{\privatedatastyle}[1]{\smath{\mathit{#1}}}
\newcommand{\publicdatastyle}[1]{\smath{\textsc{#1}}}
\newcommand{\cryptoobjectstyle}[1]{\smath{#1}}
\newcommand{\algostyle}[1]{\smath{\textsf{#1}}}
\newcommand{\boolstyle}[1]{\smath{\textsc{#1}}}
\theoremstyle{definition}
\newtheorem{theorem}{Theorem}[section]
\newtheorem{definition}[theorem]{Definition}
\newtheorem{lemma}[theorem]{Lemma}
\newtheorem{claim}[theorem]{Claim}
\newtheorem{remark}[theorem]{Remark}
    \renewcommand{\G}{\TMstyle{G}}
    \renewcommand{\C}{\TMstyle{C}}
    \newcommand{\G}{\TMstyle{G}}
    \newcommand{\C}{\TMstyle{C}}
\newcommand{\R}{\TMstyle{R}}
\newcommand{\V}{\TMstyle{V}}
\newcommand{\xA}{\TMstyle{A^\ast}}
\newcommand{\A}{\TMstyle{A}}
\newcommand{\rA}{\A}
\newcommand{\E}{\TMstyle{E}}
\newcommand{\N}{\TMstyle{N}}
\newcommand{\T}{\TMstyle{T}}
\renewcommand{\P}{\TMstyle{P}}
\renewcommand{\prec}{\preccurlyeq}
\newcommand{\trans}{\cryptoobjectstyle{\tau}} %
\newcommand{\exec}[2]{\smath{\langle #1, #2 \rangle}}
\newcommand{\Accept}{\boolstyle{Accept}}
\newcommand{\Reject}{\boolstyle{Reject}}
\newcommand{\True}{\boolstyle{true}}
\newcommand{\False}{\boolstyle{false}}
\newcommand{\prompt}{\methodstyle{prompt}}
\newcommand{\Read}{\methodstyle{read}}
\newcommand{\Write}{\methodstyle{write}}
\newcommand{\deny}{\methodstyle{deny}}
\newcommand{\loc}[1]{{\publicdatastyle{{#1}Loc}}}
\newcommand{\exDecom}{\textsf{decom}}
\newcommand{\enterPwd}{{\textsf{pwd}}}
\newcommand{\mfa}{\textsf{2fa}}
\newcommand{\hash}{\textsf{hash}}
\newcommand{\rand}{\textsf{rand}}
\newcommand{\lang}{{\textsf{lang}}}
\newcommand{\hybrid}{{\textsf{read}}}
\newcommand{\readwrite}{{\textsf{readWrite}}}
\newcommand{\secret}{{\textsf{secret}}}
\newcommand{\known}{{\textsf{known}}}
\newcommand{\denysub}{{\textsf{deny}}}
\newcommand{\binding}{{\textsf{bind}}}
\newcommand{\msg}{\privatedatastyle{{m}}}
\newcommand{\pwd}{\privatedatastyle{{pwd}}}
\newcommand{\decrypted}{\privatedatastyle{decrypted}}\newcommand{\duress}{\privatedatastyle{duress}}
\newcommand{\Com}{\algostyle{Com}}
\renewcommand{\Check}{\algostyle{Check}}
\newcommand{\OTP}{\algostyle{OTP}}
\newcommand{\Enc}{\algostyle{Enc}}
\newcommand{\thecode}{\privatedatastyle{code}}
\newcommand{\gotpwd}{\privatedatastyle{gotPwd}}
\newcommand{\getcode}{\methodstyle{getCode}}
\newcommand{\setcode}{\methodstyle{setCode}}
\newcommand{\findsecondary}{\methodstyle{findSecond}}
\newcommand{\promptpw}{\methodstyle{promptPwd}}
\newcommand{\promptcode}{\methodstyle{promptCode}}
\newcommand{\findfile}{\methodstyle{findFile}}
\newcommand{\algoTitle}[2]{\underline{\textsc{#1} #2}:\;\vspace{.3em}}
\newcommand{\triplealgorithm}[4]{
    \begin{figure*}[ht]
        \begin{subfigure}[ht]{.325\textwidth}
            \removelatexerror %
            #1
        \end{subfigure}%
        \hfill
        \begin{subfigure}[ht]{.325\textwidth}      
            \removelatexerror
            #2
        \end{subfigure}%
        \hfill
        \begin{subfigure}[ht]{.325\textwidth}
            \removelatexerror
            #3
        \end{subfigure}
    \caption{#4}
    \end{figure*}
}
\newcommand{\exampleITM}{%
    \removelatexerror %
    \begin{algorithm}[H]
    \algoTitle{Example ITM}{$M$}
    \Variable{$x_1$, $x_2\neq 0$, $x_3 \gets 5$\;}
    \MethodDef{$\methodstyle{Set}(i,x')$}{\Set{$x_i \gets x'$}}
    \MethodDef{$\methodstyle{Send()}$}{
    $M'$.$\methodstyle{Receive}(x_1, x_2, x_3)$
    }
\end{algorithm}}
\newcommand{\decomverifier}{%
    \removelatexerror
    \begin{algorithm}[H]
        \algoTitle{Verifier}{$\V_{\exDecom}^{\N}$}
    	\lRcv{$\A$}{$(x', d')$}
    	$c \leftarrow \N[\E.\publicdatastyle{commLoc}].\methodstyle{read}()$\;
    	\If{$(\Check(c, d', x') == 1)$}{\Return \Accept}
    	\lElse{\Return \Reject}
    \end{algorithm}
}
\newcommand{\decomevidence}{%
\removelatexerror
\begin{algorithm}[H]
    \algoTitle{Evidence}{$\E_{\exDecom}$}
    \KwData{$\publicdatastyle{commLoc}$\;}
    \OracleR{$\N[\publicdatastyle{commLoc}]$\;}
    \Method{$\R.\methodstyle{secret}()$, $\R.\methodstyle{decom}()$}
    \AssertAfter{$\Check(c, d, x) = 1$}{$c \leftarrow \N[\publicdatastyle{commLoc}]$\; $d \leftarrow \R.\methodstyle{decom}()$\; $x \leftarrow \R.\methodstyle{secret}()$\;
    $\exists \Com$ such that $(\Com,\Check)$ is a commitment scheme \quad\quad($\star$)}
\end{algorithm}}
\newcommand{\decomtarget}{%
\removelatexerror
\begin{algorithm}[H]
    \algoTitle{Target action}{$\T_{\exDecom}^{\R,\N}$}
    \Return $\R.\methodstyle{secret}()$
\end{algorithm}}
\newcommand{\decompost}{%
\removelatexerror
\begin{algorithm}[H]
    \algoTitle{Post-processor}{$\P_{\exDecom}^{\N'}(\trans)$}
    \text{Parse the first round of }$\trans$ \text{as} {$(x, d)$}\;
    \Return $x$
\end{algorithm}}
\newcommand{\decomexemplar}{%
\removelatexerror
\begin{algorithm}[H]
    \algoTitle{Exemplar action}{$\xA_{\exDecom}^{\R,\N}$}
    $x \leftarrow \R.\methodstyle{secret}()$\;
    $d \leftarrow \R.\methodstyle{decom}()$\;
    \lSend{$\V$}{$(x,d)$}
\end{algorithm}}
\newcommand{\eptdD}{%
\removelatexerror
\begin{algorithm}[H]
    \algoTitle{Device}{$D_\enterPwd$}
    \Variable{$\pwd$, $\msg$, $\decrypted \gets \False$\;}
    \MethodDef{$\prompt(x)$}{
        \lIf{$(x==\pwd)$}{
            \Set{$\decrypted \gets \True$}
        }
    }
    \MethodDef{$\Read()$}{
        \lIf{$(\decrypted == \True)$}{\Return{$\msg$}}
        \lElse{\Return{$\bot$}}
    }
\end{algorithm}}
\newcommand{\eptdDdeny}{%
\removelatexerror
\begin{algorithm}[H]
    \algoTitle{Device}{$D_\denysub$ implements $D_\enterPwd$}
    \tcp{Has the following variables and methods in addition to those provided by $D_\enterPwd$ in Fig.~\ref{algs:enter-a-password}}
    \Variable{$\duress$\;}
    \MethodDef{$\deny(x,y)$}{
        \lIf{$(x==\duress)$}{
            \Set{$\msg \gets y$\;}
            \Set{$\decrypted \gets \True$}
        }
    }
\end{algorithm}}
\newcommand{\eptdE}{%
\removelatexerror
\begin{algorithm}[H]
    \algoTitle{Evidence}{$\E_\enterPwd$}
    \KwData{$\loc{D}$;}
    \Oracle{$\N[\loc{D}]$\;}
    \Method{$\R.\pwd()$\;}
    \Assert{$D \prec \N[\loc{D}]$; $D.m\neq \bot$;
    $\R.\pwd() == D.\pwd$ \quad\quad{($\star$)} \;}
\end{algorithm}}
\newcommand{\eptdV}{%
\removelatexerror
\begin{algorithm}[H]
    \algoTitle{Verifier}{$\V_\enterPwd^{\N}$}
    \Set{$\msg \gets \N[\E.\loc{D}].\Read()$}\;
    \lIf{$(m\neq\bot)$}{\Return{\Accept}}
    \lElse{\Return{\Reject}}
\end{algorithm}}
\newcommand{\eptdT}{%
\removelatexerror
\begin{algorithm}[H]
    \algoTitle{Target}{$\T_\enterPwd^{\N,\R}$}
    \Set{$x \gets \R.\pwd()$}\;
    $\N[\E.\loc{D}].\prompt(x)$\;
    \Set{$\msg' \gets \N[\E.\loc{D}].\Read()$}\;
    \Return{$\msg'$}
\end{algorithm}}
\newcommand{\eptdX}{%
\removelatexerror
\begin{algorithm}[H]
    \algoTitle{Exemplar}{$\xA_\enterPwd^{\N,\R}$}
    $\N[\E.\loc{D}].\prompt(\R.\pwd())$
\end{algorithm}}
\newcommand{\eptdP}{%
\removelatexerror
\begin{algorithm}[H]
    \algoTitle{Post-processor}{$\P_\enterPwd^{\N'}(\trans)$}
    \Return{$\N[\E.\loc{D}].\Read()$}
\end{algorithm}}
\newcommand{\mfaD}{%
\removelatexerror
\begin{algorithm}[H]
    \algoTitle{Primary Device}{$D$}
    \Variable{$\pwd$, $\msg$, $\thecode$, $\decrypted \gets \False$, $\gotpwd \gets \False$\; }
    
    \MethodDef{$\promptpw(x)$}{
        \If{$(x==\pwd)$}{
            \Set{$\thecode \gets \$$}\;
            $S.\setcode(\thecode)$\;
            \Set{$\gotpwd \gets \True$}
        }
    }
    \MethodDef{$\promptcode(c)$}{
        \lIf{$(\gotpwd == \True) \land (c==\thecode)$}{
           \!\!\Set{$\decrypted \gets \True$}
        }
    }
    \MethodDef{$\Read()$}{
        \lIf{$(\decrypted == \True)$}{\Return{$\msg$}}
        \lElse{\Return{$\bot$}}
    }
\end{algorithm}}
\newcommand{\mfaE}{%
\removelatexerror
\begin{algorithm}[H]
    \algoTitle{Evidence}{$\E_\mfa$}
    \KwData{$\loc{device}$\;}
    \Method{$\R.\methodstyle{pwd}()$, $\R.\findsecondary()$\;}
    \Oracle{$\N[\loc{device}]$, $\N[\R.\findsecondary()]$ \;}
    \Assert{$D\prec\N[\loc{device}]$; $S\prec\N[\R.\findsecondary()]$; $\R.\methodstyle{pwd}() == D.\pwd$; $D.\msg \ne \bot$\;}
    \AssertAfter{$D.code == c$}{
        $D.\promptpw(D.\pwd)$\;
        $c\gets S.\getcode()$\;
    }
\end{algorithm}}
\newcommand{\mfaS}{%
\removelatexerror
\begin{algorithm}[H]
    \algoTitle{Secondary Device}{$S$}
    \Variable{$c$}
    \MethodDef{$\setcode(\thecode)$}{
    \Set{$c \gets \thecode$}
    }
    \MethodDef{$\getcode()$}{
    \Return{$c$}
    }
\end{algorithm}}
\newcommand{\mfaV}{%
\removelatexerror
\begin{algorithm}[H]
    \algoTitle{Verifier}{$\V_\mfa^{\N}$}
    $\msg \gets \N[\E.\loc{device}].\Read()$\;
    \lIf{$m\neq\bot$}{\Return{\Accept}}
    \lElse{\Return{\Reject}}
\end{algorithm}}
\newcommand{\mfaT}{%
\removelatexerror
\begin{algorithm}[H]
    \algoTitle{Target}{$\T_\mfa^{\N,\R}$}
    $x \leftarrow \R.\methodstyle{pwd}()$
    $\N[\E.\loc{device}].\promptpw(x)$\hspace*{-3em}\;
    $c \gets \N[\R.\findsecondary()].\getcode()$\;
    $\N[\E.\loc{device}].\promptcode(c)$\!\;
    $\msg \gets \N[\E.\loc{device}].\Read()$\;
    \Return{$\msg$}
\end{algorithm}}
\newcommand{\mfaX}{%
\removelatexerror
\begin{algorithm}[H]
    \algoTitle{Exemplar}{$\xA_\mfa^{\N,\R}$}
    $\N[\E.\loc{device}].\promptpw(\R.\methodstyle{pwd}())$\hspace*{-3em}\;
    $c \gets \N[\R.\findsecondary()].\getcode()$\;
    $\N[\E.\loc{device}].\promptcode(c)$
    \lSend{$\V$}{$(x,d)$}
\end{algorithm}}
\newcommand{\mfaP}{%
\removelatexerror
\begin{algorithm}[H]
    \algoTitle{Post-processor}{$\P_\mfa^{\N'}(\trans)$}
    \Return{$\N[\E.\loc{primary}].\Read()$}
\end{algorithm}}
\newcommand{\langE}{%
\removelatexerror
\begin{algorithm}[H]
    \algoTitle{Evidence}{$\E_\lang$}
    \Variable{$\R.z$\;}
    \Method{$\R.x()$\;}
    \AssertWhere{$\R.x() \in L_{\R.z}$}{
    each $z' \in \{0,1\}^*$ is associated with a language $L_{z'} \subseteq \{0,1\}^*$}
\end{algorithm}}
\newcommand{\langT}{%
\removelatexerror
\begin{algorithm}[H]
    \algoTitle{Target}{$\T_\lang^{\R}$\hspace*{-3em}}
    \Return{$\R.x()$}
\end{algorithm}}
\newcommand{\hashE}{%
\removelatexerror
\begin{algorithm}[H]
    \algoTitle{Evidence}{$\E_\hash$}
    \KwData{$y$\;}
    \Method{$\R.\findfile()$\;}
    \OracleR{$\N[\R.\findfile()]$}
    \AssertAfter{$y == h(x)$}{
        $x \gets \N[\R.\findfile()].\Read()$\;
    }
\end{algorithm}}
\newcommand{\hashV}{%
\removelatexerror
\begin{algorithm}[H]
    \algoTitle{Verifier}{$\V_\hash^{\N}$}
    \lRcv{$\A$}{$x'$}
    \lIf{$(h(x') == \E.y)$}{\Return{\Accept}}
    \lElse{\Return{\Reject}}
\end{algorithm}}
\newcommand{\hashP}{%
\removelatexerror
\begin{algorithm}[H]
    \algoTitle{Post-processor}{$\P_\hash^{\N'}(\trans)$}
    Parse the first round of $\trans$ as $x'$\;
    \Return{$x'$}
\end{algorithm}}
\newcommand{\hashX}{%
\removelatexerror
\begin{algorithm}[H]
    \algoTitle{Exemplar}{$\xA_\hash^{\N,\R}$}
    $x \gets \N[\R.\findfile()].\Read()$\;
    \lSend{\V}{$x$}
\end{algorithm}}
\newcommand{\hashT}{%
\removelatexerror
\begin{algorithm}[H]
    \algoTitle{Target}{$\T_\hash^{\N,\R}$}
    \Return{$\N[\R.\findfile()].\Read()$}
\end{algorithm}}
\newcommand{\hybridE}{%
\removelatexerror
\begin{algorithm}[H]
    \algoTitle{Evidence}{$\E_\hybrid$}
    \KwData{$\loc{D}$\;}
    \Oracle{$\N[\loc{D}]$\;}
    \Assert{$D_\hybrid \prec \N[\loc{D}]$}
\end{algorithm}}
\newcommand{\hybridD}{%
\removelatexerror
\begin{algorithm}[H]
    \algoTitle{Device}{$D_\hybrid$}
    \Variable{$\msg$\;}
    \MethodDef{$\Read()$}{
        \Return{$\msg$}
    }
\end{algorithm}}
\newcommand{\hybridDprime}{%
\removelatexerror
\begin{algorithm}[H]
    \algoTitle{Device}{$D_\readwrite$}
    \Variable{$\msg$\;}
    \MethodDef{$\Read()$}{
        \Return{$\msg$}
    }
    \MethodDef{$\Write(x)$}{
        $\msg\gets x$
    }
\end{algorithm}}
\newcommand{\hybridT}{%
\removelatexerror
\begin{algorithm}[H]
    \algoTitle{Target}{$\T_\hybrid$}
    \Return{$\N[\loc{D}].\Read()$}
\end{algorithm}}
\newcommand{\secretE}{%
    \removelatexerror
    \begin{algorithm}[H]
    \algoTitle{Evidence}{$\E_{\secret}$}
    \Variable{$\R.x$, $\R.k$}
\end{algorithm}}    
\newcommand{\knownE}{%
    \removelatexerror
    \begin{algorithm}[H]
    \algoTitle{Evidence}{$\E_{\known}$}
    \KwData{$\loc{x}$}
    \Variable{$\R.x$, $\R.k$}
    \OracleR{$N[\loc{x}]$}
    \Assert{$\R.x == \N[\loc{x}].\Read()$}
\end{algorithm}}
\title{Can the Government Compel Decryption? Don't Trust --- Verify}
\author{Aloni Cohen}
\email{aloni@uchicago.edu}
\affiliation{%
  \institution{University of Chicago}
  \city{Chicago}
  \state{IL}
  \country{USA}
}
\author{Sarah Scheffler}
\email{sscheff@princeton.edu}
\affiliation{%
  \institution{Princeton University}
  \city{Princeton}
  \state{NJ}
  \country{USA}
}
\author{Mayank Varia}
\email{varia@bu.edu}
\affiliation{%
  \institution{Boston University}
  \city{Boston}
  \state{MA}
  \country{USA}
  \postcode{02140}
}
  \renewcommand\footnotetextcopyrightpermission[1]{} %
  \keywords{compelled decryption; law; Fifth Amendment; deniable encryption}
\patchcmd{\maketitle}{\@copyrightpermission}{
   \begin{minipage}{0.3\columnwidth}
     \href{https://creativecommons.org/licenses/by/4.0/}{\includegraphics[width=0.90\textwidth]{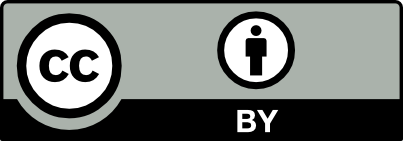}}
   \end{minipage}\hfill
   \begin{minipage}{0.7\columnwidth}
     \href{https://creativecommons.org/licenses/by/4.0/}{This work is licensed under a Creative Commons Attribution International 4.0 License.}
   \end{minipage}
 
   \vspace{5pt}
}{}{}
\begin{document}

\begin{abstract}
\emph{If a court knows that a respondent knows the password to a device, can the court compel the respondent to enter that password into the device?}
In this work, we propose a new approach to the foregone conclusion doctrine from \emph{Fisher v. U.S.} that governs the answer to this question.
The Holy Grail of this line of work would be a framework for reasoning about whether the testimony implicit in any action is already known to the government.
In this paper we attempt something narrower.
We introduce a framework for specifying actions for which all implicit testimony is, constructively, a foregone conclusion.
Our approach is centered around placing the burden of proof on the government to \emph{demonstrate}
that it is not ``rely[ing] on the truthtelling'' of the respondent.

Building on original legal analysis and using precise computer science formalisms, we propose \emph{demonstrability} as a new central concept for describing compelled acts. We additionally provide a language for whether a compelled action meaningfully \emph{entails} the respondent to perform in a manner that is ``as good as'' the government's desired goal.
Then, we apply our definitions to analyze the compellability of several cryptographic primitives
including decryption, multifactor authentication, commitment schemes, and hash functions.
In particular, our framework reaches a novel conclusion about compelled decryption in the setting that the encryption scheme is deniable:
the government can compel but the respondent is free to use any password of her choice.

\end{abstract}

\maketitle

\section{Introduction}
\label{sec:intro}
In a criminal case,
if the government wants to read the contents of your encrypted phone or laptop, can they compel you to enter or disclose your password? 
If so, when?
The question sounds simple, but it doesn't have an answer.
Instead, it has too many answers.

In the past few years, several state supreme courts in the United States have grappled with this question and reached wildly different rulings.
The Massachusetts Supreme Judicial Court found that you can be compelled to enter your password to decrypt the hard drive contents \cite{commonwealth_v_gelfgatt,commonwealth_v_jones}, whereas the Indiana Supreme Court reached the opposite conclusion \cite{seo_v_state}.
The New Jersey Supreme Court ruled that you can even be forced to disclose your password \cite{state_v_andrews}, whereas the Pennsylvania Supreme Court ruled that passwords themselves cannot be compelled \cite{pennsylvania_v_davis}.
Conflicting decisions also exist among federal circuit courts \cite{duces_tecum,macpro} and
district courts (e.g., \cite{sec_v_huang,in_re_boucher,us_v_kirschner,us_v_fricosu}).

The confusion stems from disagreement about how to adapt the law of pre-digital \emph{act of production} cases, where the government compels production of physical documents (see e.g., \cite{fisher_v_us,us_v_hubbell,us_v_doe}).
How should we overcome this physical-digital divide?
Most prior work starts from the notion that encryption isn't fully understood within the law,
and
reasons about how encryption technology is analogous to---or fundamentally different from%
---physical acts of production.
This article pursues a different path: we use the fact pattern of compelled decryption as a means to interrogate our understanding of the act of production doctrine generally.

\paragraph{The foregone conclusion doctrine}\quad
This work exclusively considers the law in the United States, where the government's ability to compel an action is substantially limited by the Fifth Amendment to the U.S.\ Constitution \cite{const5a}.
The Fifth Amendment prohibits the government from compelling an individual to provide self-incriminating \emph{testimony}, such as oral or written statements that ``disclose the contents of his own mind'' \cite{curcio_v_us}.

Moreover, the government cannot compel an individual---who we call the \emph{respondent}---to perform an action 
which, by its very performance, would reveal \emph{implicit testimony}. 
But there's a catch:
an action that communicates implicit testimony
can be compelled
if all the implicit testimony is already a ``foregone conclusion.'' If the government knows it, the implicit testimony would ``add[] little or nothing to the sum total of the Government's information'' about the respondent's mind \cite{fisher_v_us}, and the government would in no way be ``relying on the truthtelling'' of the respondent.
This idea is called the \emph{foregone conclusion doctrine} and stems from the 1976 Supreme Court case \emph{Fisher v.\ United States} \cite{fisher_v_us}.

\paragraph{Complementing the law with computer science}\quad
This work pursues a formal (in the computer science sense) investigation of the foregone conclusion doctrine.
Our target is a specification of the relevant law, written in the language of math and cryptography,
that captures a meaningful and coherent version of the foregone conclusion doctrine.
Our objective for doing so is to
to gain a deeper understanding of the foregone conclusion doctrine in order to reason clearly and consistently about compelled decryption under varying present and future fact patterns.

We don't want to mechanize the law. Law is flexible and adapts to new situations,
and attempts to encode society's rules into a rigid computer algorithm (aka, ``code as law'') typically don't end well \cite{morrison2020dao,yeung2019regulation}.
Hence,
our pursuit of a computer science specification is not meant to replace the law, but \emph{to illuminate it}.
Courts and legal scholars disagree wildly about compelled decryption. Some of that is due to imperfect analogies. Some is genuine disagreement as to what the law is. Some is simply bad law.
In many cases, it is unclear how the various approaches relate to one another or apply to alternative scenarios.

Our goal is to present our approach in enough detail that one can apply it to unforeseen facts and get a concrete answer. Other scholars can identify both if and where they disagree with our interpretation.
We make no claims that our work is ``the correct interpretation'' of the doctrine. We do claim that it is reasonable and coherent, that it exposes the nuance and ambiguity in the law, and that it serves as a base for others to improve in the future.

\paragraph{Demonstrating foregone conclusions through verification}
Consider a government actor $\G$ and a respondent $\R$.
The overarching question is whether the government can compel the respondent to perform a certain \emph{action} $\A$, such as producing documents or decrypting a computer.
In this work, we specify $\A$ as an algorithm that can request inputs from the respondent's mind, like the document's location or the computer's password.
If this action involves explicit testimony, then it is strictly forbidden and the concept of foregone conclusion is irrelevant. %

Otherwise, to compel this action, the government must generally show two things.\footnote{%
    These two things are neither necessary nor sufficient conditions for compelling an action. For example, they are not sufficient if the act is providing oral or written testimony, and they are not necessary if the act is not incriminating.} 
First, the government must establish that the respondent can perform the action $\A$ by submitting some convincing evidence $\E$ to the court.
Second, in compelling the respondent to perform the action, the government must not ``rely on the truthtelling'' of the respondent $\R$.

A key insight from \emph{Fisher} is that the government does not rely on $\R$'s truthtelling if it has some way of \emph{verifying} that $\R$ performed the required action.
The government's task is typically called ``authentication'' in legal scholarship because when $\A$ is an \emph{act of production} then it corresponds to authenticating in the evidentiary sense. We generalize the concept because
authentication of evidence is a poor fit when $\A$ is an \emph{act of performance} instead; see \S \ref{sec:legal} for details.

Our goal is to make \emph{Fisher}'s verification idea precise.
One approach might be to require that the government specifies a procedure $\V$ that checks whether the respondent $\R$ has performed the action \emph{exactly}.
But this approach is too restrictive. For example, suppose the action is ``enter a password into this password prompt.'' Must $\R$ stand or sit? Eyes open or closed? Can $\R$ make a typo if it succeeds on the second try?
Of course the government doesn't care about any of this -- it just wants the password to be entered.
Another approach might be for the government to specify a set of all acceptable actions, but this set is infinite.
The upshot is that the government's interest lies in the ends, not the means, and we should approach the foregone conclusion doctrine accordingly.

Our insight is to flip $\A$ and $\V$ around.
Rather than asking how to verify a particular action $\A$, in this work we make the verification procedure $\V$ the centerpiece of any foregone conclusion claim.
For example, rather than compelling the respondent to ``enter a password,'' instead the government compels the respondent to ``do something that gets past the login prompt screen and to the point that we can open the file explorer.''
It is then up to the respondent to choose any action $\A$ that \emph{conforms} to this verifier $\V$.

Making the verifier $\V$ the centerpiece of our specification immediately resolves cruel trilemma of ``relying on the truthtelling.''
The respondent is free to perform
any
action it wants so long as $\V$ accepts; there is no longer a question of honesty or evasion.
The only remaining question is how to evaluate whether the respondent's task is feasible in the first place.
We impose the weakest requirement: the government must demonstrate that, based on the evidence, the respondent can perform some action $\xA$ that $\V$ accepts.
We call $\xA$ the \emph{exemplar action}, and we call $\V$ \emph{demonstrable} if it has such an exemplar action.
In summary, our shift in focus from $\A$ to $\V$ changes how we view compulsion:
instead of ``is action $\A$ a foregone conclusion?'' we ask ``is verifier $\V$ \emph{demonstrable}?''

\paragraph{Meaningfulness}
The cost of this shift in focus is a lack of clarity about what the government will get.
Presumably the government has a target action $\T$ in mind that it really wants the respondent to perform.
There are many demonstrable $\V$'s for any given scenario; for example, the trivial $\V$ that accepts any action is always demonstrable,
but it provides no guarantee that the government will receive the result of $\T$ or anything remotely resembling it.
Another example of a demonstrable-but-meaningless verifier is to compel the respondent to flip a coin and verify that the result is one of ``heads'' or ``tails'' (rather than, say, ``42'').

But sometimes, it may be possible to gather enough evidence and specify $\V$ in enough detail that the respondent must
perform an action that is more-or-less equivalent to $\T$.
We call this property \emph{entailment} and we codify the claim of ``more-or-less equivalent'' using the cryptographic notion of extraction.
That is, $\V$ entails $\T$ if the government can transform the result of any conforming action into the fruits of the action $\T$ it desired all along.

\paragraph{Our contributions}
This work makes several contributions in the domains of law and computer science.
\begin{compactitem}
\item As a matter of law, we adopt a view of implicit testimony as comprised of dual but equally important parts: ex ante \emph{ability} and ex post \emph{conformity}. 
We also present a new verification-centric approach to compelling acts while constructively satisfying the foregone conclusion doctrine (\S \ref{sec:legal}).

\item We define the concepts of demonstrability and entailment within a formal computer science framework (\S \ref{sec:formalism}-\ref{sec:entailment}). The framework is expressive enough to capture, for example, the government's uncertainty as to whether an encryption scheme is deniable or not.

\item Connecting the computer science to the law, we investigate whether several acts involving cryptography are compellable within our framework (\S \ref{sec:examples} and Appendix \ref{app:examples}).
\end{compactitem}
Generally speaking, we believe that \S \ref{sec:legal} will be of most interest to readers with a background in law, and \S \ref{sec:formalism}-\ref{sec:entailment} will appeal to a computer science audience.
That said, we highlight some of the takeaways about the compellability of cryptography, for both audiences.
\begin{compactitem}
\item Decryption by entering a password is entailable in the (typical) case that the encryption scheme is committing. This is true even when using two-factor authentication. More generally, opening \emph{commitments} is generally entailable. 

\item Our framework reaches a novel middle ground conclusion if decryption is deniable -- meaning that it will reveal different files when given a duress password. 

\item In contrast to decryption and opening commitments, the acts of encrypting and creating commitments are not entailable.

\item Hash functions add a twist: compelling preimages is demonstrable and---under a cryptographic assumption---entailable.
\end{compactitem}

\paragraph{Comparison to prior work}
This article is inspired by several recent works that codify legal principles with computer science
techniques \cite{EC:GarGolVas20,FORC:CohenDMS21,PNAS:CohenN20,arXiv:JudFei22,PODS:Nissim21,Nissim2017bridging,Altman2021hybrid}, including Scheffler and Varia's related work on compelled decryption \cite{USENIX:SchVar21}
that we contrast with in \S \ref{ssec:usenix}.
Additionally, this work is influenced by a large body of legal scholarship on compelled decryption
\cite{kiok,sacharoff,winkler,cohen-park,kerr2016,kerr2018compelled,kerr2020,mcgregor,terzian2013fifth,terzian-dec},
some of which suggest broad powers to compel decryption and some argue that the government can only compel files that they already know with ``reasonable particularity.''

Our framework generally aligns with the broader interpretation, and it considers all acts of performance beyond just the produced files.
In the next section, we delve deeper into case law, scholarship, and the value of a verification-centric approach as a matter of law.

The question of deniable encryption highlights the differences between the current work and prior work (see Section~\ref{sec:enter-password:deniable}).
Deniable encryption introduces a \emph{duress} password which, when entered, allows a device to be decrypted to something other than its true contents.
In \emph{Comm.\ v Gelfgatt}, the court simply ordered the respondent to use the true password rather than the duress password.\footnote{\label{note:gelfgatt}{\emph{Comm.\ v Gelfgatt}~\cite{commonwealth_v_gelfgatt} at note 10 (the decryption ``protocol'' states in relevant part: ``The defendant shall manually enter the password or key to each respective digital storage device in sequence. \dots The defendant is expressly ordered not to enter a false or 'fake' password or key'').}}
Sacharoff sets aside this ``niche case because deniable encryption remains rare'' \cite{sacharoff}. Kerr argues that duress passwords are ``unlikely to raise significant Fifth Amendment issues'' because, if used, ``the government will not realize what the user has done.''\footnotemark~Cohen and Park argue that the government would be unable to compel decryption without some way of distinguishing the respondent's use of the true and duress passwords \cite{cohen-park}.

    \footnotetext{{\emph{Kerr}~\cite{kerr2018compelled} at note 78. It is not clear to what extent these are statements of legal doctrine (e.g., no implicit testimony at issue; the testimony is always a foregone conclusion) or of practical reality (e.g., the deniable encryption will be invisible and therefore will not be at issue in the proceedings). 
    In one sense, Kerr agrees with us:  when ordered to ``bypass [the] password prompt'' a respondent is free to use a duress password. But he seems to suggest that this is merely a de facto freedom, and that use of a duress password could ``violate the Decryption Order.'' Under this reading, an order forbidding the use of the duress password (as in \emph{Gelfgatt}, note~\ref{note:gelfgatt} above)  may be permissible even if the government ``will not realize what the user has done.''
    In contrast, this paper argues that the decryption order can only forbid the use of a duress password if the government is able to verify whether or not the true password was used. Otherwise, the act of decrypting with the true password would communicate testimony as to conformity that is not a foregone conclusion.}}

In contrast, our framework would allow the government to compel the respondent enter one of the true or duress passwords, but the respondent would be free to choose. If the government also had evidence as the contents of the encrypted drive---say, specific filenames as in \textit{In re Boucher} \cite{in_re_boucher}---the government could require that the apparently-decrypted device contain the known files by checking for those files in the verification procedure. In such cases, our framework yields a procedure more like Sacharoff's and Cohen-Park's than Kerr's. But like Kerr, meaningful acts are compellable even without any specific knowledge of the device's contents.

\section{A new take on foregone conclusions}
\label{sec:legal}

This section gives a too-brief overview of the law of the foregone conclusion doctrine and of our new approach to that doctrine. A full treatment is out of scope in this paper.\footnote{A draft of a companion law review article was presented at the Privacy Law Scholars Workshop in 2021 \cite{plsc}.}

The Fifth Amendment to the US Constitution provides individuals a privilege against being ``compelled in any criminal case to be a witness against himself.''\footnote{\textsc{U.S. Const.} Amend. V.}
Courts and legal scholars disagree about how this privilege applies to so-called compelled decryption cases, wherein the government seeks to compel an individual to unlock or decrypt an encrypted phone, computer, or hard drive. At issue is how the foregone conclusion test applies to the act of decryption. 

The Fifth Amendment's privilege applies only to acts that are ``testimonial.''
\footnote{Fisher v. United States, 425 U.S. 391, 408 (1976) 
\cite{fisher_v_us}}
An act is testimonial when it ``explicitly or implicitly[] relate[s] a factual assertion or disclose[s] information.''
\footnote{United States v Doe, 487 U.S. 201, 208 (1988) 
\cite{doe_v_us_ii}}
As such, a person may be compelled to furnish a blood sample 
\footnote{Schmerber v. California, 384 U.S. 757, 763-764 (1966) 
\cite{schmerber_v_california}} or to 
put on a shirt
\footnote{Holt v. United States, 218 U.S. 245 (1910) 
\cite{holt_v_us}} no matter how incriminating. But a person may not be compelled to make incriminating oral or written statements.

\paragraph{Implicit testimony and the foregone conclusion doctrine}\quad
An act may have "testimonial aspects" even if it does not involve any oral or written testimony as such. \emph{Fisher v. U.S.} lays out a version of this idea now called the act of production doctrine.
In \emph{Fisher}, the government issued a subpoena to compel the defendant to produce certain documents. ``The act of producing the documents has communicative aspects of its own, wholly aside from the contents of the papers produced. Compliance with the subpoena tacitly concedes the existence of the papers demanded and their possession or control by the [respondent]. It also would indicate the [respondent]'s belief that the papers are those described in the subpoena.''
The lesson is that testimony may be implicit in an act. \emph{Fisher}'s  examples---existence, possession, and authenticity---are an application of this idea to the specific act of producing physical evidence. %

\emph{Fisher} also introduced the \emph{foregone conclusion doctrine} which provides a way for the government to compel an act of production despite the implicit testimony.
Kerr gives a typical account of the doctrine \cite{kerr2018compelled}:
    \begin{quote}
    [W]hen the testimonial aspect of a compelled act ``adds little or nothing to the sum total of the Government’s information,'' any implied testimony is a ``foregone conclusion'' and compelling it does not violate the Fifth Amendment. . . .
    The Court [in \emph{Fisher}] held that the testimony implicit in handing over the tax documents was a foregone conclusion because the government was ``in no way relying on the 'truthtelling' of the [respondent]'' to prove it.
    \end{quote}

In other words, ``A valid privilege exists only when the compelled act is testimonial under the act of production doctrine but is not a foregone conclusion'' \cite{kerr2018compelled}.
Importantly, the foregone conclusion doctrine only considers testimony implicit in an act. It does not apply to `pure' oral or written testimony -- no matter how foregone, pure testimony cannot be compelled.\footnote{\emph{Fisher}, 258 U.S. \cite{fisher_v_us} (``It is doubtful that implicitly admitting the existence and possession of the papers rises to the level of testimony within the protection of the Fifth Amendment'' at 411; additionally ``it [the subpoena to produce documents] does not compel oral testimony, nor would it ordinarily compel the taxpayer to restate, repeat, or affirm the truth of the contents of the documents sought[, t]herefore the Fifth Amendment would not be violated'' at 409.  These statements contrast this foregone conclusion doctrine against \emph{explicit} compelled admittance of the existence and possession of the papers by oral testimony.  Thus the foregone conclusion doctrine may compel only implicit testimony and not explicit oral testimony, even if that information is already known to the government.)}
And the foregone conclusion doctrine is unconcerned  with information that may result from the act, but is not implied by the act. Borrowing an example from Kerr, the testimony implicit in the act of opening a door to a treasure filled room does not include the treasure itself, only the ``door-opening'' testimony. The value of the treasure is not relevant to the Fifth Amendment privilege.

\paragraph{Ability and conformity}\quad
Applying the foregone conclusion doctrine first requires answering the question: what are the implicit testimonial aspects in the act of decryption (or some other act of performance) \cite{duces_tecum, kerr2018compelled, cohen-park}?
This may sound easy, but it is where many compelled decryption analyses diverge.
For example, Kerr concludes that `` `I know the password’ is the only assertion implicit in unlocking the device'' \cite{kerr2018compelled}. Cohen and Park respond that ``[d]eniable encryption complicates the very notion of `the' password by introducing an alternate duress password'' that appears to decrypt but hides certain files \cite{cohen-park}.

We adopt the view that there are two types of testimony implicit in acts of performance: ability and conformity. By honestly performing an act specified by the government, a respondent implicitly asserts ``I can do it'' and ``I did do it.''  \emph{Ability}---I can do it---is the respondent's ex ante belief that she is able to perform the compelled act. \emph{Conformity}---I did do it---is the respondent's ex post belief that she did indeed perform the compelled act as specified.
This view has appeared in prior works, though the terms are new. For example, Kerr describes these as the ``two kinds of beliefs'' implicit in a compelled act.\footnote{See Kerr~\cite{kerr2018compelled} at I.A.}

The importance of ability to the foregone conclusion doctrine is well accepted, and the notion of ability has evolved beyond its act-of-production genesis.
For physical production of some object, ability amounts to the existence of the object and the respondent's possession and control thereof, \emph{Fisher} \cite{fisher_v_us}.
For entering a password, ability often amounts to the respondent's knowledge of the password \cite{kerr2018compelled}.
For decryption and production of encrypted drives, ability communicates the respondent's ``possession, control, and access to the encrypted portions of the drives; and of his capability to decrypt the files'' \cite{duces_tecum}.

By contrast, conformity's role is often minimized or overlooked.
\emph{Fisher} and all subsequent caselaw and scholarship focus on only one aspect of conformity:  \emph{authenticity}.
In Kerr's analysis, only ability informs his analysis of compelled decryption.\footnote{See Kerr~\cite{kerr2018compelled} at II.}
An act of production in response to a subpoena implicitly communicates ``that the articles produced are the ones demanded'' \cite{wigmore}.
But while authenticity --- whether something is what it purports to be --- is appropriate for acts of production, it is an ill fit for acts of \emph{performance} where the respondent must {do} something instead of {produce} something. 
Compelled decryption---say, by typing a password into an encrypted computer---is typically best treated as an of performance, not production.\footnote{See Cohen and Park \cite{cohen-park}, especially Sections~ IV.C, IV.D, V.B, V.C, and VII.}
We consider conformity as a generalization of authenticity that is relevant to acts of performance as well as production,
and our work gives conformity equal consideration for the first time.

A key difference between how Kerr and Cohen-Park treat compelled password entry is the importance they attach to conformity \cite{kerr2018compelled,cohen-park}. Both agree on ability: the act of entering a password communicates knowledge of the password. But they disagree about conformity in the context of deniable encryption. Kerr dismisses it, stating that ``the government will not realize what the user has done.'' Cohen and Park recognize that deniable encryption implicates conformity exactly because the government cannot tell what the user has done. 

\paragraph{The foregone conclusion exception}
\quad
Once the implicit testimonial aspects of the act of decryption are identified, the foregone conclusion doctrine asks: Are those aspects foregone conclusions \cite{duces_tecum, kerr2018compelled, cohen-park}?
Ability is a foregone conclusion if the government can demonstrate that the respondent can perform the action ex ante. Conformity is a foregone conclusion if the government can demonstrate ex ante that it will be able to verify whether the respondent performed the action ex post.

Ability raises no new difficulties. The government must show that the respondent can perform the act. Depending on the act, this amounts to showing that the papers exist and and are in his control \cite{fisher_v_us, us_v_hubbell}; that he knows the password (e.g. \cite{pennsylvania_v_davis,commonwealth_v_gelfgatt,commonwealth_v_jones,sec_v_huang,us_v_burns,us_v_maffei}); or that he is able to speak, write, or make a gesture \cite{schmerber_v_california}. In our formalism, the government meets its ability burden by (1) describing an \emph{exemplar action} that is  ``conforming'' (see below), and (2) presenting evidence that the respondent is able to perform the exemplar action.

Conformity is more challenging. Conformity is a foregone conclusion if the government can demonstrate that it will be able to \emph{verify} whether or not the respondent performed the compelled act after the fact. Our framework requires the government to specify the procedure it will use to verify the act, and deems \emph{any} act that successfully verifies as conforming.

Under this approach, \textbf{the verification procedure is the centerpiece of a motion to compel}.
The question of conformity is resolved by construction: any action that verifies, verifies. To show that ability is a foregone conclusion, the government must demonstrate that there exists a conforming exemplar action that the respondent can perform. If so, we say the verification procedure is \emph{demonstrable}.

Our approach to conformity addresses a shortcoming in the status quo's treatment of conformity in the guise of authenticity.
In compelled decryption cases to date, courts have generally dealt with authenticity in one of four unsatisfactory ways.
{(In each, replacing ``authenticity'' with ``conformity'' would not fix the problem.)}
First, ignore the issue altogether: ``the authenticity element is routinely cited but only applied loosely if at all.''\footnote{In re Search of a Residence in Aptos, California 95003, No.17-mj-70656-JSC-1, 2018 WL 1400401 \cite{in_re_search_of_residence_in_aptos_ca}, at 10.}
Second, defer the question to trial, which we fear could allow the government to make improper use of fruit of a poisonous tree.
Third, hold that decryption is self-authenticating---a technological assertion that deniable encryption, for example, disproves.
Fourth, attempt to reason about the {production} of some \emph{thing}---a password, plaintext, or ciphertext ---{rather than {the performance} of an act,} a confusing and unworkable approach.\footnote{See also Cohen and Park \cite{cohen-park} at VII.}

Conformity should not be a mere afterthought. Even before \emph{Fisher}, Wigmore's \emph{Treatise on Evidence} \cite{wigmore} states that production of documents or chattels may only be compelled from a respondent 
``without the use against him of process relying on his truthtelling.''
Where conformity is not a foregone conclusion, the government would be improperly ``relying on his truthtelling.''
This reliance is clear in \emph{Comm.\ v.\ Gelgatt}, where the court simply ordered the respondent ``[n]ot to enter a false or ‘fake’ password'' although it could not check compliance.\footnote{Commonwealth v. Gelfgatt, 11 N.E.3d 605 (Mass. 2014)  \cite{commonwealth_v_gelfgatt}, (see footnote 10, emphasis removed)}
Such orders revive the cruel trilemma that is at the heart of the religious tribunals that gave rise to the right to silence: the respondent must choose among perjury, contempt of court, and self-incrimination.

\paragraph{A new procedure for motions to compel.}\quad
Based on our framework, we suggest a new procedure for motions to compel decryption or other acts of performance.
First, the government makes a motion to compel. It specifies a  verification procedure and demonstrates that the respondent is able to perform some exemplar action that verifies. 
To do so, the government must clearly detail the steps of both the verification procedure and the exemplar action in a manner understandable by the judge and the respondent. The exemplar action must be one that the government would have been able to compel if all implicit testimony was forgone. In particular, the exemplar cannot require explicit oral or written testimony.
The government must also introduce any evidence that it needs to demonstrate the respondent's ability to perform the exemplar.

Second, the respondent may challenge the motion to compel in a few different ways. She may challenge the accuracy of the government's evidence. She may also argue that she cannot perform the exemplar action. Or she may argue that the Fifth Amendment privilege precludes the exemplar action for reasons other than implicit testimony. 
(The respondent may \emph{not} appeal to conformity, which is foregone by construction.)

The court must then decide whether to quash the motion or to grant it by issuing an order. The order would compel the respondent to perform an action of his choosing that passes the government-specified verification procedure. The order would contain within it a complete description of the procedure and the exemplar action.

In the new procedure based on our framework, it is important for the order to clearly specify the evidence, method of verification, and exemplar.
Doing otherwise would undermine the Fifth Amendment privilege. The resulting uncertainty would present the respondent with a cruel choice: either risk contempt for failing the verification procedure or risk disclosing implicit testimony which is \emph{not} foregone.
Moreover, the idea that the government must provide the verification procedure before compelling an action is already present in, for example, \emph{U.S. v Bright}.\footnote{United States v. Bright, 596 F. 3d 683, 693 (9th Cir. 2010) \cite{us_v_bright} (the government does not ``need to prove that it had previously authenticated the same documents \ldots it need[s] to show only that it \emph{could} do so'' (emphasis added)).} 
As an added benefit, a well-defined verification procedure might be relevant in disputes over the introduction or authentication of evidence at trial time.

Under our framework, the respondent can perform \emph{any} act that passes the government's verification procedure.
Presumably the government has in mind some \emph{target action} that it wants the respondent to perform (e.g., the exemplar). But if the government cannot tailor its verification procedure appropriately, then the compelled act may be meaningless. If however the verification is tailored so as to require something ``just as good'' as the target, we say that the verification procedure \emph{entails} the target.

For example, consider the verification procedures the government could use to compel decryption of a device encrypted using deniable encryption.  It could, of course, check that the device appears to be logged in.  However, compelling this would clearly allow a decryption under both the true and duress passwords to pass the verification procedure. 
Using the duress password is allowed---the respondent would not be acting untruthfully. 
But if the government had specific knowledge of a file visible only using the true password, then checking for that file would prevent the duress password's use.

\section{Computational formalism}
\label{sec:formalism}

In this section, we describe the formal model of computation that we consider in this work. It is inspired by 
the computation and communication model within Canetti's universal composability (UC) framework \cite{FOCS:Canetti01} and it directly extends the computational model used within Scheffler and Varia's prior work on compelled actions and foregone conclusions \cite[Appendix A]{USENIX:SchVar21}.

Concretely, we model computation as a collection of stateful, event-driven interactive Turing machines (ITM).
These Turing machines have two tapes called a method tape and an input tape.
When a machine is invoked with a string $\methodstyle{Method}$ on its method tape and an input $\mathit{input}$ on its input tape, it executes the code of $\methodstyle{Method}$.
Additionally, the machines have additional tapes that allow them to communicate with each other using authenticated, confidential channels, following the model of Canetti, Cohen, and Lindell \cite{C:CanCohLin15}.

In this work, we describe the methods of an interactive Turing machine using object-oriented pseudocode. Here is an example:

\exampleITM
Our ITMs have variables and methods. By default, all variables are private and local and all methods are public. The example machine $M$ above has three local variables, where $x_1$ can be initialized arbitrarily, $x_2$ is initialized to some nonzero value, and $x_3$ is initialized to the value $5$.
$M$ also has public methods $M.\methodstyle{Set}$ and $M.\methodstyle{Send}$.
So, we can invoke machine $M$ by placing $\methodstyle{Set}$ on its method tape and $(i, x')$ on its input tape.
(When an ITM has only one method and no variables, we omit naming the method.)
The $\methodstyle{Send}$ command sends a communication to another Turing machine $M'$ that it should invoke a method of its own called $\methodstyle{Receive}$.\footnote{Though the UC model guides us, a complete UC specification is premature at this stage in this line of work and would make this paper inaccessible to its intended audience, most of whom are unfamiliar with the UC model. For the same reasons, we do not attempt a precise compilation from our pseudocode machines to ITMs.}

\label{ssec:partial-spec}
Looking ahead, in this work it will be useful to model situations in which the government knows that a device has a specific behavior in some situations, but may not know the entire code of the device. For this reason, we define a partial ordering $\prec$ on machines.

\begin{definition}
Let $M_1$ and $M_2$ be interactive Turing machines, and let $S$ denote the set of all methods specified within $M_1$.
We say that $M_1$ is a \emph{partial specification} of $M_2$ (equivalently, $M_2$ \emph{implements} $M_1$), denoted $M_1 \prec M_2$, if:
\begin{compactitem}
\item $M_2$ contains at least all of the methods  specified in $M_1$, i.e., the methods specified in $M_2$ form a superset of $S$, and

\item For all executions of the machines that only invoke the methods within $S$, if $M_1$ halts with some output then $M_2$ halts with the same output.
This semantic equivalence must hold even over multiple, sequential invocations of the stateful machines $M_1$ and $M_2$.
\end{compactitem}
$M_1$ \emph{fully specifies} $M_2$, denoted $M_1\sim M_2$ if $M_1 \prec M_2$ and $M_2 \prec M_1$.
\end{definition}

In our example above, we can consider $M \prec M''$ for any ITM $M''$ that contains $\methodstyle{Set}$ and $\methodstyle{Send}$ commands with the same code as listed above. That said, $M''$ might have additional commands, such as a $\methodstyle{Delete}$ method that sets all variables to zero or a 
$\methodstyle{Write}$ method that covertly overwrites a private variable.
Note that $\prec$ is uncomputable in general. 

In this work, machines provided by the government (i.e., $\V$, $\xA$, $\T$, and $\P$ defined below) tend to be
fully specified.
On the other hand, 
machines controlled by others (i.e.,
$\R$ and 
$\N$) are typically only partially specified since the government
may only have some incomplete evidence
about how they operate.

\section{A verification-centric approach to compulsion}
\label{sec:verification}

We consider a \emph{government} actor \G, a \emph{respondent} \R, and the outside world (\emph{nature}) \N. 
\N is an infinite
collection of ITMs $\N[0]$, $\N[1]$, $\N[2]$, $\dots$, where the index is called the \emph{location}. 
Oftentimes we will specify a device $D$ as either a partial or full representation of an oracle at (for example) location $\ell$ in Nature, denoted as $D\prec \N[\ell]$ or $D \sim \N[\ell]$ respectively.
Some of nature's ITMs simply store information and their only functionality is a $\Read()$ method that returns that information.  We call such locations \emph{read only}.
In these cases, we sometimes simplify the notation by writing $x \gets \N[1]$ instead of $x \gets \N[1].\Read()$, for example.

\G and \R don't interact with each other or with \N directly. Instead, they each output ITMs that specify how they choose to act. The respondent outputs the \emph{action} $\A$ it will take, and the government outputs a \emph{verifier} $\V$ that represents how it will authenticate the action performed by the respondent.
One can think of \V and \A as the actions \G and \R choose to perform under the circumstances.
Separating \V and \A from \G and \R allows us to describe and reason about the interaction between government and respondent (and nature) while assuming very little about \G and \R.

We denote an execution $\exec{\V^\N}{\A^{\N,\R}}$.\footnote{%
    We borrow this notation from oracle machines to reinforce the ways the \V, \A, \N, and \R may and may not interact. More precisely though, $\V$ can invoke $\N$ and $\A$ can invoke $\N$ and $\R$ as (global) subroutines, in the UC language of Canetti \cite{FOCS:Canetti01}.}
Here, \V and \A are ITMs that may interact freely with one another and with \N. However, $\R$ may only interact with $\A$, not with \N nor \A. 
The output of the execution is \V's output: one of $\Accept$ or $\Reject$ along with a transcript $\tau$ of its view and execution.

\begin{definition}[$\V$ accepts $\A$]
We say $\V$ \emph{accepts} $\A$ \emph{with respect to} $\R,\N$ if $\langle \V^\N, \A^{\N,\R} \rangle$ returns $\Accept$ with probability 1 over the coins of $\V$, $\A$, $\R$, and machines in $\N$.
If $\N$ and $\R$ are clear from context we abbreviate this as $\V$ \emph{accepts} $\A$.
\end{definition}

\subsection{Evidence}
\label{ssec:evidence}

We envision the government as having some \emph{evidence} that certain states of the world ($\N$) or the respondent's mind ($\R$) are or are not true.  We represent the evidence as a binary relation $\E : (\R, \N) \mapsto \{0,1\}$. 
$\E(\R,\N)=1$ means that the pair $(\R,\N)$ is \emph{consistent} with the evidence. Even if $\E(\R,\N)=1$, the ``true'' state of the world can be some other $(\R',\N')$.
Conversely, $\E(\R,\N) = 0$ means that $(\R,\N)$ is inconsistent with the evidence. Namely, the evidence refutes that setting of $\R$ and $\N$. (Note that \E is generally uncomputable.)

\begin{definition}[$\E$-consistency]
Let $\E$ be a binary relation.
We say that $\R,\N$ are \emph{$\E$-consistent} if $\E(\R,\N)=1$.
\end{definition}

In our framework, $\E$ is exogenous and correct. We can imagine that the government has done the work to collect the evidence, introduced it in the court, and demonstrated its truth.
As such, the ``true'' setting of $\R,\N$ is $\E$-consistent.  (As a corollary, $\{(\R,\N) : \E(\R,\N) = 1 \}\neq \emptyset$.)  We do not consider incorrect evidence.

It will be useful to reason about cases when the government has greater or lesser evidence. We define a (weak) partial ordering over evidence which captures whether evidence $\E_2$ is ``at least as strong as'' evidence $\E_1$. This can arise when the government goes and gathers more evidence, or when it introduces additional evidence that it already had.

\begin{definition}[Partial ordering of evidence]
Let $\E_1$ and $\E_2$ be binary relations, and let $\C_{\E_1}, \C_{\E_2}$ be the set of $\E_1$- and $\E_2$-consistent worlds respectively. That is, $\C_{\E_i}$ is the set of $(\R,\N)$ for which $\E_i(\R,\N)$ outputs 1 (note that these sets may be uncomputable).

We say $\E_2 \succeq \E_1$, or \emph{$\E_2$ has at least as much evidence as $\E_1$}, if $\C_{\E_2} \subseteq \C_{\E_1}$.
Equivalently, $\E_2$-consistency implies $\E_1$-consistency.
\end{definition}

As discussed in Section \ref{ssec:partial-spec}, most of the time we consider partial specifications of ITMs.
In Section
\ref{sec:examples}
we will sometimes find it useful to consider evidences where the machines are specified fully.  We define the following useful shorthand to capture this scenario.

\begin{definition}[Fully specified $\E_D$]
\label{def:E-D}
For evidence $\E$ and ITM $D$, we define $\E_D\succeq \E$ to be the evidence where every assertion of the form ``$D \preceq M$'' is replaced with ``$D \sim M$.''

\end{definition}

\subsection{Demonstrability}

Our key mechanism to avoid ``relying on the truthtelling'' is for the government to specify a verifier $\V$ which will be used to verify the respondent's response.
There must be some \emph{exemplar action} $\xA$ which demonstrates that the respondent is \emph{able} to make the verifier accept, however in the context of the court case the respondent is allowed to perform \emph{any} action $\rA$ that results in the verifier outputting $\Accept$.

\begin{definition}[Demonstrability]
\label{defn:demonstrability}
A verifier $\V$ is \emph{demonstrable} with respect to evidence $\E$ if
there exists an ``efficient'' (see Remark \ref{rem:efficiency})  action $\xA$ such that for all $(\R,\N)$ that are $\E$-consistent:
\begin{compactenum}
\item Every method call by $\xA$ to $\R$ produces some output.
\item $\V$ accepts $\xA$ with respect to $\R,\N$
\end{compactenum}
\end{definition}

The first requirement essentially requires that $\R$ is able to perform any actions in $\xA$ that directly involve the respondent's mind.  Not only does this requirement ensure that $\E$ contains some evidence that $\R$ has the method called, the fact that the method must produce some output for \emph{all} settings of $\R$ ensures that $\E$ must specify some behavior on the output of the method (or else the $\R$ for which the method outputs nothing will violate the statement).

Our envisioned courtroom procedure for a motion to compel would require $\G$ to provide $\V$, $\xA$, and $\E$  to the court and to $\R$ (see Section \S \ref{sec:legal}).

\begin{remark}
\label{rem:efficiency}
We deliberately leave the ``efficiency'' requirement of $\xA$ somewhat vague in Definition \ref{defn:demonstrability}.  We expect that the court will want to define ``ability to perform $\xA$'' in a way that implies some kind of reasonable time and effort limit.  For example, if $\xA$ instructed $\R$ to decrypt a ciphertext, this is \emph{theoretically} doable even without the key -- if the respondent spends time greater than the lifetime of the universe running a brute-force decryption algorithm.  Instead, we expect a more typical setting is when $\G$ knows $\R$ is capable of decrypting the ciphertext quickly (say, by using the key).
We leave incorporating computational requirements to future work. 
\end{remark}

The respondent may perform any action $\rA$ that results in the verifier outputting $\Accept$.  We call such actions  \emph{conforming}:

\begin{definition}[Conformity]
$\A$ \emph{conforms with} $\V$ for a given $\N, \R$ if $\V$ accepts $\A$ with respect to $\N,\R$.
If any of these algorithms are randomized, then acceptance must hold with probability 1 over all random choices.
We also say $\A$ is $\V$-\emph{conforming}.
\end{definition}

Lemma \ref{lem:demonstrability-monotonic} shows that the more evidence the government reveals, the more it can compel. An action never becomes ``less compellable'' by revealing more evidence. 
Focusing narrowly on compulsion, it is in the government's best interests to reveal all evidence it possesses.
We also remark that adding more evidence only restricts the respondent's possible response actions.

\begin{lemma}[Demonstrability, conforming are monotonic in $\E$]
\label{lem:demonstrability-monotonic}
Let $\V$ be demonstrable with respect to evidence $\E_1$, and let $\xA$ be $\V$-conforming.
For all $\E_2 \succeq \E_1$, $\V$ is  demonstrable with respect to $\E_2$, and $\xA$ is $\V$-conforming.
\end{lemma}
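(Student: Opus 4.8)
The plan is to reduce the entire statement to one structural fact: $\E_2 \succeq \E_1$ means precisely $\C_{\E_2} \subseteq \C_{\E_1}$, and both demonstrability and conformity are assertions that are universally quantified over the consistent worlds $(\R,\N)$. Shrinking the index set of such a $\forall$-statement can only make it easier to satisfy, so nothing needs to be rebuilt.

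First I would unfold the hypothesis. Demonstrability of $\V$ with respect to $\E_1$ supplies an efficient action $\xA$ such that for every $\E_1$-consistent $(\R,\N)$: (1) every method call made by $\xA$ to $\R$ produces some output; and (2) $\V$ accepts $\xA$ with respect to $\R,\N$. I claim the \emph{same} $\xA$ witnesses demonstrability with respect to $\E_2$. Fix any $(\R,\N)$ with $\E_2(\R,\N) = 1$; by the definition of $\E_2 \succeq \E_1$ this pair also satisfies $\E_1(\R,\N) = 1$, so conditions (1) and (2) apply verbatim. The efficiency requirement (Remark~\ref{rem:efficiency}) is inherited because $\xA$ is literally unchanged. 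Hence $\V$ is demonstrable with respect to $\E_2$.

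Second, the conformity claim follows by the identical move. Reading ``$\xA$ is $\V$-conforming'' with the quantification that is implicit throughout this monotonicity discussion---that $\V$ accepts $\xA$ with respect to $\R,\N$, with probability $1$ over the coins of $\V$, $\xA$, $\R$, and the machines of $\N$, for every $\E_1$-consistent $(\R,\N)$---the acceptance statement survives the restriction from $\C_{\E_1}$ to its subset $\C_{\E_2}$. So $\xA$ remains $\V$-conforming with respect to the $\E_2$-consistent worlds.

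I do not expect a real obstacle here: this is a monotonicity-of-$\forall$ argument and nothing more. The only point needing care is notational---pinning down over which family of $(\R,\N)$ the unqualified phrase ``$\V$-conforming'' ranges, since the definition of conformity fixes a single $(\N,\R)$ while the lemma leaves it implicit. Once the natural convention (``for all consistent worlds'') is adopted, both conclusions fall straight out of $\C_{\E_2} \subseteq \C_{\E_1}$, using no property of $\V$, $\xA$, $\R$, or $\N$ beyond what is already assumed.
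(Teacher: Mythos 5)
Your proposal is correct and follows essentially the same route as the paper's proof: both reduce the claim to $\C_{\E_2}\subseteq\C_{\E_1}$ and observe that demonstrability and conformity are universally quantified over consistent worlds, so the same exemplar $\xA$ carries over. If anything, you are slightly more thorough than the paper, which checks only the acceptance condition and leaves the method-output condition and the conformity clause implicit.
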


\begin{proof}
Let $\C_{\E_i}$ be the set of $\E_i$-consistent $(\R,\N)$ for $i \in \{1,2\}$.
By definition of demonstrability, $\forall (\R,\N) \in \C_{\E_1}$, the execution $\langle \V^\N, \xA^{\N,\R} \rangle$ returns 1. 
$\E_2 \succeq \E_1$ implies that 
$\C_{\E_2} \subseteq \C_{\E_1}$. Hence, $\forall (\R,\N) \in \C_{\E_2}$, so the execution $\langle \V^\N, \xA^{\N,\R} \rangle$ returns 1. 
Thus, $\V$ is also demonstrable with respect to $\E_2$, with exemplar $\xA$. 
\end{proof}

\section{Entailment}
\label{sec:entailment}

In the last section, we described how, 
in order to avoid relying on the respondent's truthtelling, the government may provide a demonstrable verifier $\V$, along with an exemplar action $\xA$ and evidence $\E$, and compel the respondent to take some action $\rA$ that is accepted by $\V$.
Ultimately though,
the government likely has the goal of compelling the respondent to perform some specific \emph{target action} $\T$.
In this section, we ask: is there some demonstrable $\V$ that forces $\R$ to perform the target action $\T$, or something ``just as good''?

To illustrate why this is a different property than demonstrability, consider that there are many demonstrable $\V$'s for any given scenario. If nothing else, the trivial $\V$ that always outputs \Accept is demonstrable for any exemplar action and evidence. With this trivial $\V$, even though the government might hope that the respondent would choose to perform the exemplar action $\xA$, the respondent could take any action $\A$ whatsoever in order to satisfy $\V$. But in some scenarios, it may be possible to specify $\V$ in enough detail that $\R$ has no choice but to perform an action that is more or less equivalent to $\T$.

We call this property \emph{entailment}, and we say that $\V$ entails $\T$ if $\G$ can recover the fruits of $\T$ after $\R$ performs \emph{any} $\V$-conforming action $\A$ (including but not limited to the exemplar action $\xA$). Namely, the government has some method $\P$ that ``post-processes'' the result of $\A$ to recover what would have resulted if $\R$ had run $\T$ itself. 

\begin{definition}[Entailment]
\label{defn:entailment}
Consider an oracle TM $\T$ (target) that takes no input and outputs a string in $\{0,1\}^\ast$.
$\V$ \emph{entails} $\T$ with respect to evidence $\E$ if there exists an efficient oracle TM $\P$
such that for all $\E$-consistent $\N$, $\R$ and all $\V$-conforming $\A$: $$\P^{\N'}(\trans) \equiv \T^{\N,\R}(),$$
where $\trans \gets \exec{ \V^{\N}}{\A^{\N,\R}}$ is the transcript of the execution between $\V^{\N}$ and $\A^{\N,\R}$, and $\N'$ is the state of nature after the interaction between $\V$ and $\A$.
We say $\T$ is \emph{entailable} if there exists demonstrable $\V$ that entails $\T$.
\end{definition}

To complete this definition, it only remains to define what $\equiv$ means.
We begin with the easier case: if all Turing machines in the definition are deterministic, then $\equiv$ denotes exact equality.
That is: the Turing machines $\P$ and $\T$ must output the same string, even if they follow very different paths to get there.
The harder case is when randomness is involved (formally, when the Turing machines have randomness tapes following the model of Canetti \cite{FOCS:Canetti01}).
In this case we impose a high bar for entailment: for each setting of the randomness tapes of $\P$, $\V$, $\A$, $\T$, $\R$, and all of the machines within $\N$, it must be the case that the outputs of $\P$ and $\T$ are identical.

\begin{remark}
The requirement that $\P$ is ``efficient'' is necessary for entailment to be meaningful.
For example, consider the case of compelled decryption.
If the post-processor could run for an unbounded amount of time, then $\P$ could simply ignore the respondent's action and brute-force the decryption, allowing the government to entail the act of decryption even when the respondent did not have the key!
For this reason, Definition \ref{defn:entailment} requires that the post-processor $\P$ be efficient.
As in Remark~\ref{rem:efficiency}, we leave the efficiency requirement underspecified and for future work.
\end{remark}

\begin{remark}
We note that our definition of entailment only considers target actions $\T$ that produce some output. As such, it does not require
that the state of nature $\N_{A,P}$ that results after $\A$ and $\P$ is the same as the state of nature $\N_{T}$ that would have resulted if $\R$ performed $\T$. 
But entailment is expressive enough to capture a somewhat weaker guarantee. Suppose there is a function $q^\N$ that reads from $\N$ whatever the government believes ex ante is relevant, and outputs the result. If $\V$ entails $\T_q \triangleq q \circ T$, then $\N_{A,P}$ and $\N_{T}$ will be the same in the (ex ante) relevant part.
\end{remark}

\subsection{Impossibility of entailing unknown goals}

In this section, we show that it should not be possible to entail actions where the government does not even know what property they would need to check to verify the truth.
We illustrate %
this concept by example and then show a general theorem.

Suppose the evidence states that $\R$ has some secret $z$ (for instance, a list of places she had been that week).  And suppose $\R$ has a method $x()$ that can state one valid value for that statement (e.g. where she was last night).
Can the government compel $\R$ to reveal where she was sometime this past week?  Or, in our parlance, is the target action which returns $\R.x()$ entailable?
(For the sake of this example, suppose the government was permitted to compel \emph{any} act they could verify, even explicit verbal testimony.)

If the government has no evidence that the respondent was located anywhere in particular in the last week, then intuitively, there should be no way to entail the action of $\R.x()$.\footnote{Observe that there are demonstrable $\V$s with ``return $\R.x()$'' as the exemplar. However, the respondent may respond by, for example, responding to the query with ``Boston'' even if she had never been there. This is not the result of $\R.x()$, but the verifier must accept it anyway because the government has no way to rule out the possibility.}
Theorem~\ref{thm:languagesofmind} makes this idea precise.

\noindent \begin{tabular}{p{0.30\textwidth}p{0.16\textwidth}}
\langE & \langT
\end{tabular}

\begin{theorem}\label{thm:languagesofmind}

For a given $\N$ and $\E$, let $\mathfrak{R}_{\N,\E} = \{\R : (\R, \N)\in \E\}$ be the set of respondents that are consistent with $\N$ and $\E$.
For $\E' \succeq \E_\lang$, if $\exists \N$ such that $\mathfrak{R}_{\N,\E'}\neq \emptyset$ and
$$ 
\bigcap_{\R\in \mathfrak{R}_{\N,\E'}} L_{R.z} = \emptyset,
$$
then there does not exist a demonstrable $\V$ which entails $\T_\lang$ with respect to $\E'$. 
\end{theorem}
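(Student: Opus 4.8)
\emph{The plan is to argue by contradiction.} Suppose some verifier $\V$ is both demonstrable with respect to $\E'$, witnessed by an exemplar $\xA$ (Definition~\ref{defn:demonstrability}), and entails $\T_\lang$ with respect to $\E'$, witnessed by a post-processor $\P$ (Definition~\ref{defn:entailment}). Fix the nature $\N$ guaranteed by the hypothesis, so $\mathfrak{R}_{\N,\E'}\neq\emptyset$ and $\bigcap_{\R\in\mathfrak{R}_{\N,\E'}} L_{\R.z}=\emptyset$. The first step is to convert the empty-intersection hypothesis into a genuine disagreement among minds: I claim there are $\E'$-consistent pairs $(\R_a,\N)$ and $(\R_b,\N)$, together with settings of their randomness tapes, under which $\R_a.x()$ returns a value $v_a$ and $\R_b.x()$ returns a value $v_b$ with $v_a\neq v_b$. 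If not, then $\R.x()$ returns one and the same fixed value $v$ for every $\E'$-consistent $\R$ and every randomness; but $\E'\succeq\E_\lang$ means each such $\R$ is $\E_\lang$-consistent, hence satisfies $\R.x()\in L_{\R.z}$, so $v\in L_{\R.z}$ for every $\R\in\mathfrak{R}_{\N,\E'}$, i.e.\ $v\in\bigcap_{\R\in\mathfrak{R}_{\N,\E'}} L_{\R.z}=\emptyset$ --- absurd, since $\mathfrak{R}_{\N,\E'}$ is nonempty.

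The heart of the argument is that $\V$ never interacts with $\R$, so the respondent is free to submit an action that secretly simulates a \emph{different} mind, and $\V$ --- hence $\P$, which only sees $\V$'s transcript --- cannot notice. Concretely, I would construct an action $\A^{\ast}$ that discards the real respondent entirely: $\A^{\ast}$ runs the code of $\xA$ verbatim, except that every query $\xA$ would make to the mind is answered by an internal simulated copy of $\R_a$, whose description $\A^{\ast}$ hardcodes. Since $(\R_a,\N)$ is $\E'$-consistent, demonstrability guarantees that all of $\xA$'s mind-queries to $\R_a$ return values --- which the internal simulation reproduces faithfully, over multiple stateful invocations --- and that $\V^\N$ accepts $\xA^{\N,\R_a}$ with probability $1$. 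Because $\A^{\ast}$ presents $\V$ and $\N$ with exactly the interaction that $\xA^{\N,\R_a}$ would, $\V$ accepts $(\A^{\ast})^{\N,\R}$ for \emph{every} $\R$; in particular $\A^{\ast}$ is $\V$-conforming with respect to both $(\N,\R_a)$ and $(\N,\R_b)$. Moreover the transcript $\trans^{\ast}$ of $\exec{\V^\N}{(\A^{\ast})^{\N,\R}}$ and the post-interaction state $\N'$ depend only on the randomness of $\V$, $\A^{\ast}$, and the machines of $\N$ --- not on the real $\R$ at all.

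Now I would invoke entailment twice, holding fixed the randomness of $\P,\V,\A^{\ast}$ and of $\N$'s machines and varying only the real respondent: once with $(\N,\R_a)$ and $\R_a$'s distinguishing randomness, once with $(\N,\R_b)$ and $\R_b$'s. Both executions yield the identical transcript $\trans^{\ast}$ and identical final state $\N'$, so $\P^{\N'}(\trans^{\ast})$ is the same string in both runs. But the entailment identity $\P^{\N'}(\trans^{\ast})\equiv\T_\lang^{\N,\R}()$ forces that string to equal $v_a$ in the first run and $v_b$ in the second, and $v_a\neq v_b$ --- contradiction. Hence no demonstrable $\V$ entails $\T_\lang$ with respect to $\E'$.

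I expect the main obstacle to be making the ``lying action'' $\A^{\ast}$ and its indistinguishability rigorous within the model: I must verify that the internal simulation of $\R_a$ reproduces $\xA$'s exact view of the mind across all (stateful, sequential) invocations, and that $\A^{\ast}$'s interaction with $\N$ coincides with $\xA$'s, so that the joint view of $\V$ and $\N$ --- and therefore $\trans^{\ast}$ and $\N'$ --- is truly unchanged. A subsidiary concern is the randomness bookkeeping demanded by the ``for each setting of the randomness tapes'' clause of Definition~\ref{defn:entailment}: the two applications of entailment must share every randomness tape that $\trans^{\ast}$, $\N'$, and $\P$'s output depend on, which is precisely why the construction is engineered to route the only $\R$-dependence of the execution out of the picture. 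Efficiency is not an obstacle (cf.\ Remark~\ref{rem:efficiency}): an impossibility proof need only exhibit \emph{some} conforming action, and the respondent may perform any action whatsoever.
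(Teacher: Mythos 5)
Your proposal is correct and follows essentially the same route as the paper's proof: both hardcode a fixed $\E'$-consistent respondent into the exemplar to obtain a conforming action whose execution --- and hence $\P$'s output --- is independent of the real $\R$, and then derive a contradiction with entailment from the empty-intersection hypothesis. The only cosmetic difference is the final step: the paper takes any $x^*$ in the support of $\P$'s output and finds an $\R^*$ with $x^*\notin L_{\R^*.z}$ while $\T_\lang^{\N,\R^*}\in L_{\R^*.z}$, whereas you first extract two consistent respondents (or randomness settings) with distinct values of $\R.x()$ and observe that the single fixed output of $\P$ cannot equal both.
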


Note that the hypothesis of Theorem \ref{thm:languagesofmind} implies that the government has no ability to check membership in the language $L_{\R.z}$.  This is the formalization of the property stated above, that the government does not know ``what it is looking for.''

We defer the proof of Thm.~\ref{thm:languagesofmind} to Appendix~\ref{app:proof:languagesofmind}.
The proof does not rely on any computational limitations on the part of the government---only the information-theoretic uncertainty of the true value of $\R.z$.

\triplealgorithm{\hybridD \hybridT}{\hybridE}{\hybridDprime}{\label{algs:hybrid}Algorithms for the example in Section~\ref{sec:entailment:partial-spec} to show the challenge of entailment with partial evidence}

\iffull
\subsection{Impossibility of entailing distributions}

In Appendix \ref{ex:commitment} we will investigate actions that result in a distribution, such as compelled encryption or commitments.
In those sections, it will be useful to make use of the following result, which states that it is impossible to entail actions that result in a distribution using the random coins of the action itself.

\begin{theorem}\label{thm:random}
Let $\E$ be some evidence, and 
let $\T_{\rand}$ be some target action.
Suppose that $\T_{\rand}$ uses its own randomness in a non-trivial way: specifically, there exists $\E$-consistent $\N,\R$ with the property that at least one fixed setting of the random tapes of $\N,\R$ have the property that $\vert \textsf{Support}(\T_{\rand}^{\N,\R}) \vert \ge 2$.
Then, there is no demonstrable $\V$ that entails $\T_{\rand}$.
\end{theorem}

We defer the proof of Theorem \ref{thm:random} to Appendix~\ref{app:proof:random}.
\fi

\subsection{Entailment and partial specifications}
\label{sec:entailment:partial-spec}
The ability for evidence to partially-specify functionalities in $\N$ poses a major difficulty for proving entailment in many cases. 

The algorithms in Figure~\ref{algs:hybrid} illustrate the problem. Consider $\E_\hybrid$ stating that $D_\hybrid \prec \N[\loc{D}]$: there is a device at location $\loc{D}$ in nature that is consistent with $D_\hybrid$. $D_\hybrid$ has a variable $\msg$ and a method $\Read()$ that returns $\msg$.
It may seem that $\T_\hybrid$ which simply outputs $\N[\loc{D}].\Read()$ should be entailable. After all, $\P$ could just call $\N[\loc{D}].\Read()$ itself. Whatever $\V$ entails $\T_\hybrid$ could have an exemplar action that does nothing.

But in fact $\T_\hybrid$ is not entailable! The problem is that the behavior of $\N[\loc{D}]$ is only partially specified. While it must be consistent with $D_\hybrid$, it may have other methods too. For example, $\N[\loc{D}]$ could instead contain device $D_\readwrite$ (Figure~\ref{algs:hybrid}) which has another method $\Write(x)$ that sets $\msg$ to $x$. 
Because $\E$ contains no information about $\msg$, if the $\xA$ that does nothing is $\V$-conforming, then $\A_x$ that calls $\Write(x)$ is also $\V$-conforming.

As such, many of our entailment claims will be proved with respect to the stronger evidence $\E_D$ wherein some function specifications are strengthened from partial to full specifications (see Definition~\ref{def:E-D}).
This showcases our framework's ability to reason about uncertainty about the world, as illustrated by the example of deniable encryption in Section~\ref{sec:enter-password:deniable}.

\section{Compelled acts of computation}
\label{sec:examples}

In this section we consider how to construct demonstrable verifiers and analyze entailment for some important idealized foregone conclusion scenarios.
We begin with the common example of compelling a respondent to enter a passcode that decrypts a device under the assumption that the encryption scheme is not deniable (Section~\ref{sec:enter-password}).
Then in Section~\ref{sec:enter-password:deny} we show that removing the assumption of non-deniability eliminates the ability to entail the action of ``honest'' decryption, standing in contrast with most prior legal analysis.
We then describe the entailability of cryptographic \emph{decommitment}, which implies entailment of a wide variety of functionalities (Section~\ref{ex:decommitment}).
This finding stands in direct contrast with the approach of Scheffler and Varia \cite{USENIX:SchVar21} (see Appendix \ref{sub:compare-sv}).
We also examine additional examples, namely two-factor authentication, hash preimages, \emph{en}cryption, and commitmnets in Appendix~\ref{app:examples}.

\subsection{Entering a password to decrypt}
\label{sec:enter-password}\label{sec:entailment:decryption}

In the archetypal compelled decryption case, the government has lawfully seized an encrypted device and seeks order compelling the respondent to decrypt by entering her password.
The government wants to specify an action for which all implicit testimony is foregone. For this act to be meaningful, it should allow the government to recover the decrypted plaintext.
In our framework, the government seeks a demonstrable $\V_\enterPwd$ that entails the target task $\T_\enterPwd$ of decrypting and producing the plaintext (given in Figure~\ref{algs:enter-a-password}). 
For an idealized non-deniable encryption, we will show that this is possible if the government knows that the respondent knows the password (as in Kerr \cite{kerr2018compelled}).

In this subsection, we will consider the algorithms and evidence given in Figure~\ref{algs:enter-a-password} (unless otherwise stated).
In particular, we consider a simple password-protected device $D_\enterPwd$ that provides read access to a message if the correct password is entered, and hides all message contents otherwise.
If the government has evidence $\E_\enterPwd$ that the respondent knows this password, then it can use a verifier $\V_\enterPwd^{\N}$ that checks if the device is unlocked and capable of displaying any message, along with an exemplar action $\xA_\enterPwd^{\N,\R}$ in which the respondent types the known password into the prompt.

\subsubsection{$\R$ knows the password}
We begin with the case where $\G$ knows that $\R$ knows the password.
The evidence $\E_\enterPwd$ states that $\G$ knows that a device implementing $D_\enterPwd$ is at location $\loc{D}$ in nature. 
$D=D_\enterPwd$ is some idealized encrypted device, with a hardcoded password $D.\pwd$ and message $D.\msg\neq \bot$.
The evidence also states that $\R$ has some method $\R.\pwd()$ that returns the password $D.\pwd$ (the starred line in Fig.~\ref{algs:enter-a-password}). After that password is entered into $D$, the message $D.\msg$  can be read using the method $D.\Read()$.

Applying our framework, the government must specify a demonstrable verification procedure $\V_\enterPwd$. 
The government's verifier $\V_\enterPwd$ reads $\msg$ from the device, outputting $\Accept$ if {$\msg \ne \bot$}.

\begin{claim}
$\V_\enterPwd$ is demonstrable with respect to $\E_\enterPwd$.
\end{claim}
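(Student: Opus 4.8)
The plan is to use the exemplar action $\xA_\enterPwd$ from Figure~\ref{algs:enter-a-password} as the witness to demonstrability and to verify the two conditions of Definition~\ref{defn:demonstrability} by a direct, deterministic trace --- every machine that is actually touched here ($\xA_\enterPwd$, $\V_\enterPwd$, the instance $D_\enterPwd$, and the method $\R.\pwd()$) is deterministic, so ``acceptance with probability $1$'' just means ``acceptance along the one trace.'' First I would fix an arbitrary $(\R,\N)$ with $\E_\enterPwd(\R,\N)=1$; unpacking the assertions of $\E_\enterPwd$ gives the fixed device instance $D = D_\enterPwd$ with $D.\msg \neq \bot$, the partial-specification fact $D \prec \N[\loc{D}]$, and the guarantee that $\R.\pwd()$ halts with output $D.\pwd$. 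Note that $\xA_\enterPwd$ is a single fixed ITM independent of $(\R,\N)$: it only uses $\R$ and $\N$ as subroutines and the value $\E.\loc{D}$, all of which is available to the government.

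Condition~(1) --- every method call $\xA_\enterPwd$ makes to $\R$ produces output --- is immediate, since the only such call is $\R.\pwd()$, and $\E_\enterPwd$ asserts $\R.\pwd() == D.\pwd$, which in particular says this call returns the well-defined string $D.\pwd$. For condition~(2) I would trace the execution $\exec{\V_\enterPwd^\N}{\xA_\enterPwd^{\N,\R}}$: $\xA_\enterPwd$ invokes $\N[\loc{D}].\prompt(\R.\pwd())$, i.e.\ $\N[\loc{D}].\prompt(D.\pwd)$, and afterwards $\V_\enterPwd$ invokes $\N[\loc{D}].\Read()$. Both $\prompt$ and $\Read$ belong to the interface of $D_\enterPwd$, so this ``$\prompt$ then $\Read$'' sequence is an execution invoking only methods in the specified set of $D_\enterPwd$; run against $D_\enterPwd$ itself it halts with output $D.\msg$ (the $\prompt$ call sets $\decrypted \gets \True$ because its argument equals $D.\pwd$, so the subsequent $\Read$ call returns $D.\msg$), and hence by $D_\enterPwd \prec \N[\loc{D}]$ --- using the clause that semantic equivalence holds over sequential invocations of the stateful machines --- the same sequence run against $\N[\loc{D}]$ also halts with output $D.\msg$. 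Since $D.\msg \neq \bot$, $\V_\enterPwd$ sets $\msg \neq \bot$ and returns $\Accept$. As $(\R,\N)$ was an arbitrary $\E_\enterPwd$-consistent world and $\xA_\enterPwd$ is efficient under any reasonable reading of Remark~\ref{rem:efficiency} (one call to $\R$, one to $\N$, then halt), $\V_\enterPwd$ is demonstrable with respect to $\E_\enterPwd$.

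The step I would be most careful about is the invocation of $\prec$. One must confirm that $\xA_\enterPwd$ and $\V_\enterPwd$ together never invoke any method of $\N[\loc{D}]$ outside the interface of $D_\enterPwd$ --- they touch only $\prompt$ and $\Read$ --- so that the partial-specification guarantee actually applies to the whole relevant trace; and one must confirm that the execution model sequences $\xA_\enterPwd$'s $\prompt$ call before $\V_\enterPwd$'s $\Read$ call, which it does because $\V_\enterPwd$'s sole action is that read and it is performed after $\xA_\enterPwd$ has acted on $\N$. Everything else is bookkeeping, and the argument uses no computational assumptions.
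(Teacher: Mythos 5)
Your proposal is correct and follows essentially the same route as the paper's proof: same exemplar $\xA_\enterPwd$, the same check that the only call to $\R$ is $\R.\pwd()$ (which $\E_\enterPwd$ guarantees outputs $D.\pwd$), and the same trace of $\prompt$-then-$\Read$ on $\N[\loc{D}]$ ending in $\Accept$. Your extra care in invoking $\prec$ only on method sequences drawn from $D_\enterPwd$'s interface (rather than speaking of $\N[\loc{D}]$'s internal state as the paper does) is a slightly cleaner way to apply the partial-specification definition, but it is the same argument.
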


\begin{proof}
Consider exemplar action
$\xA_\enterPwd$ in Fig.~\ref{algs:enter-a-password}. First, we must check that every method call by $\xA_\enterPwd$ to $\R$ produces some output. $\xA_\enterPwd$ only calls $\R.\pwd()$. By $\E$, this method exists and outputs $D.\pwd$. 
Second, we must check that $\V_\enterPwd$ accepts $\xA_\enterPwd$ with respect to $\R$, $\N$. Let $M$ be the device at $\N[\loc{D}]$. \E guarantees that $M$ implements $D_\enterPwd$ (i.e., $D_\enterPwd \prec M$). $\xA$ calls $M.\prompt$ with input $\R.\pwd()$. By \E, $\R.\pwd() = M.pwd$. Hence, when $\xA$ halts, $M.\decrypted = \True$. Now $\V_\enterPwd$ calls $M.\Read()$, which returns $M.\msg$. \E states that $\msg \neq \bot$.  Hence $\V_\enterPwd$ returns $\Accept$.
\end{proof}

\triplealgorithm{\eptdE  \eptdV }{\eptdD}{\eptdX {\eptdT  \eptdP}}{\label{algs:enter-a-password}Example algorithms for compelled decryption by entering a password as described in Section~\ref{sec:enter-password} and used in Section~\ref{sec:enter-password:deny}. The starred ($\star$) assertion in $\E_\enterPwd$ refers to the respondent's knowledge of the decryption password. Removing that line captures the setting when the government has no evidence of the respondent's knowledge of the password.}

By construction, any $\V_\enterPwd$-conforming action $\A$ will allow the government to recover the plaintext $\msg$, assuming the encrypted device is not deniable. We use entailment to make this precise. Consider $\T_\enterPwd$ in Figure~\ref{algs:enter-a-password}, which enters $\R.\pwd()$ into the device's password prompt, then reads and returns the message $\msg$. By the evidence, the result is the plaintext message.

We prove that $\V_\enterPwd$ entails $\T_\enterPwd$ under the stronger evidence $\E_{D_\enterPwd}$ that the device is fully specified by $D$---ruling out deniability. 
This restriction is necessary, and follows the general pattern described in Section~\ref{sec:entailment:partial-spec}.
For one, the proof breaks down: the step marked ($\clubsuit$) does not hold for arbitrary $\E$.  Moreover, Section~\ref{sec:enter-password:deniable} shows that the government cannot entail $\T_\enterPwd$ when the device is deniable but otherwise consistent with $D_\enterPwd$.

This example illustrates the expressiveness of our approach and the power of entailment.
Claim~\ref{claim:entailment-enter-password} states that in this example `decrypting by entering a password' entails  `decrypting then producing the plaintext'.
But the act of decrypting-then-producing is not itself compellable! Specifically, it would require implicit testimony about conformity that is not a foregone conclusion under the evidence.
(Additionally outside of our model, some would argue that producing the plaintext presents testimonial issues beyond \emph{Fisher}.)

\begin{claim}
\label{claim:entailment-enter-password}
$\V_\enterPwd$ entails $\T_\enterPwd$ with respect to $\E_{D_\enterPwd}$.
\end{claim}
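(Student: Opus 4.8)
The plan is to unwind the definition of entailment and show that both $\P_\enterPwd^{\N'}(\trans)$ and $\T_\enterPwd^{\N,\R}()$ evaluate to the constant message hardcoded in the device at $\loc{D}$. Every machine in sight here ($\V_\enterPwd$, $\T_\enterPwd$, $\P_\enterPwd$, and $D_\enterPwd$) is deterministic, so by Definition~\ref{defn:entailment} it suffices to establish exact equality. Concretely, I would fix an arbitrary $\E_{D_\enterPwd}$-consistent pair $(\R,\N)$ and an arbitrary $\V_\enterPwd$-conforming action $\A$, write $M := \N[\loc{D}]$ for the device Nature places at $\loc{D}$, and let $\trans$ and $\N'$ be the transcript and the resulting state of the execution $\exec{\V_\enterPwd^\N}{\A^{\N,\R}}$. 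The post-processor $\P_\enterPwd$ just returns $\N'[\loc{D}].\Read()$, and $\T_\enterPwd$ computes $\N[\loc{D}].\Read()$ after feeding $\R.\pwd()$ into the prompt, so the whole claim reduces to comparing these two reads.

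First I would compute $\T_\enterPwd^{\N,\R}()$. Because $\E_{D_\enterPwd}$ strengthens the assertion $D_\enterPwd \prec M$ to $D_\enterPwd \sim M$, the device $M$ behaves \emph{exactly} like $D_\enterPwd$ on the methods $\prompt$ and $\Read$: it carries internal state $(\pwd,\msg,\decrypted)$ with $\decrypted$ initially $\False$, and, by the starred assertion of $\E_\enterPwd$, $M.\pwd = \R.\pwd()$ and $M.\msg \ne \bot$. Running $\T_\enterPwd$ on the original $\N$ therefore fetches $x = \R.\pwd() = M.\pwd$, calls $M.\prompt(x)$ which sets $M.\decrypted \gets \True$, and returns $M.\Read() = M.\msg$. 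Hence $\T_\enterPwd^{\N,\R}() = M.\msg$.

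Next I would evaluate $\P_\enterPwd^{\N'}(\trans) = \N'[\loc{D}].\Read()$. This is the step I expect to be the crux, and the one that needs the full-specification evidence (the point I would mark $(\clubsuit)$): since $D_\enterPwd \sim M$, the device $M$ has \emph{no} methods beyond $\prompt$ and $\Read$, and neither of those ever writes to $\msg$; so no matter what $\A$ did during the execution, the value of $M.\msg$ in $\N'$ is still the original hardcoded message. Moreover, $\A$ being $\V_\enterPwd$-conforming means $\exec{\V_\enterPwd^\N}{\A^{\N,\R}}$ outputs $\Accept$, so the $\Read()$ call inside $\V_\enterPwd$ returned a non-$\bot$ value; by $D_\enterPwd$'s code this forces $M.\decrypted = \True$ at that moment, and since $\Read()$ does not modify state, $M.\decrypted = \True$ also holds in $\N'$. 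Therefore $\P_\enterPwd^{\N'}(\trans) = \N'[\loc{D}].\Read() = M.\msg = \T_\enterPwd^{\N,\R}()$, establishing entailment (and $\P_\enterPwd$ makes a single oracle call, so it is trivially efficient).

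The only real obstacle is justifying that $M.\msg$ is untouched by $\A$, and that is exactly why the claim is stated for $\E_{D_\enterPwd}$ rather than for $\E_\enterPwd$: under the weaker partial-specification evidence, $M$ could secretly implement something like $D_\readwrite$ from Figure~\ref{algs:hybrid}, letting a $\V_\enterPwd$-conforming $\A$ overwrite $\msg$ before $\V_\enterPwd$ reads it, which breaks the equality in $(\clubsuit)$. This is precisely the phenomenon flagged in Section~\ref{sec:entailment:partial-spec}, and Section~\ref{sec:enter-password:deniable} shows the gap is genuine rather than an artifact of the proof. A minor bookkeeping point to state carefully is that $\T_\enterPwd$ is evaluated on the original $\N$ while $\P_\enterPwd$ sees $\N'$; this is harmless here because both reads occur against a device whose $\msg$ is the same constant and whose $\decrypted$ flag is $\True$.
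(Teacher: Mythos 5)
Your proposal is correct and follows essentially the same route as the paper's proof: both decompose the claim into showing that $\T_\enterPwd^{\N,\R}$ and $\P_\enterPwd^{\N'}(\trans)$ each output $M.\msg$, with the target-action side using the starred evidence that $\R.\pwd() = M.\pwd$ and the post-processor side resting on the full-specification step the paper marks $(\clubsuit)$. Your writeup merely spells out $(\clubsuit)$ in slightly more detail (no extra methods can write to $\msg$ under $D_\enterPwd \sim M$), which matches the paper's surrounding discussion in Section~\ref{sec:entailment:partial-spec}.
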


\begin{proof}
We must provide an efficient oracle machine $\P_\enterPwd$ such that for all $\A$ that is $\V_\enterPwd$-conforming: $\P_\enterPwd^{\N'}(\trans) = \T_\enterPwd^{\N,\R}$.
Let $M = \N[\loc{D}]$ and $M' = \N'[\loc{D}]$.
Let $\P_\enterPwd^{\N'}$ simply call and return $M'.\Read()$ as in Figure~\ref{algs:enter-a-password}. To complete the proof, we show that both $\P_\enterPwd$ and $\T_\enterPwd$ always output $M.\msg$.

\emph{(RHS)}.
By $\E_\enterPwd$-consistency of $\R$, we get $x = M.\pwd$ on line 1 of $\T_\enterPwd$. By $\E$-consistency of $M$, we get $\msg' =M.\msg$ on line 3 of $\T_\enterPwd$. Hence $\T_\enterPwd^{\N,\R}$ always outputs $M.\msg$.

\emph{(LHS)}.
Because $\A$ is $\V_\enterPwd$-conforming, we have $\exec{\V_\enterPwd^{\N}}{ \A^{\N,\R} }=\Accept$. 
By construction, the execution $\exec{\cdot}{\cdot}$ first runs $\A$ and then runs $\V_\enterPwd$.
Because $\V_\enterPwd$ does not change the state of $M$ (i.e., does not call $M.\prompt$), $\N'$ is equal to the state of nature after $\A^{\N,\R}$ terminates.
By construction of $\V_\enterPwd$ (and the conformity of $\A$), $M'.\Read() \neq \bot$ after $\A$ terminates. 
If $M'.\Read()\neq \bot$, then $M'.\Read()=M.\msg$  \ ($\clubsuit$). Hence $\P_\enterPwd^{\N'}$ always outputs $M.\msg$.
\end{proof}

\subsubsection{$\R$ may not know the password}

Next, we drop the starred assertion in $\E_\enterPwd$ in Figure~\ref{algs:enter-a-password}. This tweaks the facts to remove the government $\G$'s assertion that $\R$ knows a password to $D$. 
Call the modified evidence $\E^\star$.

$\E^\star$ states that the respondent has some method $\R.\pwd()$ that may do something. But not what it does, nor whether it has any relation to $D$.
Under $\E^\star$, it is entirely consistent that $\R$ simply does nothing.
That is, for any $(\N,\R)$ that is $\E^\star$-conforming, $(\N,\R_\bot)$ is also $\E^\star$-conforming where $\R_\bot$ immediately halts on all inputs.

\begin{claim}
$\T_\enterPwd$ is not entailable with respect to $\E^\star$.
\end{claim}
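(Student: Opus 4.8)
The plan is to show that no demonstrable verifier can entail $\T_\enterPwd$ under $\E^\star$ by exploiting the respondent $\R_\bot$ that does nothing. The key observation, already noted in the excerpt, is that $\E^\star$-consistency is preserved when we replace the true $\R$ by $\R_\bot$: whatever nature $\N$ is consistent with $\E^\star$, the pair $(\N,\R_\bot)$ is also $\E^\star$-consistent, since $\E^\star$ only asserts that $\R$ \emph{has} a method $\R.\pwd()$ without constraining its behavior. Meanwhile $\E^\star$ says essentially nothing about the device $D$ beyond $D_\enterPwd \prec \N[\loc{D}]$; in particular $\E^\star$ does not pin down $D.\msg$, and in fact does not even guarantee that the message variable is ever readable, since $D.\decrypted$ starts $\False$ and $\E^\star$ gives no guarantee that anything will set it to $\True$.

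Here is the argument in order. Suppose toward contradiction that $\V$ is demonstrable with respect to $\E^\star$, with exemplar $\xA$, and that $\V$ entails $\T_\enterPwd$ with respect to $\E^\star$ via post-processor $\P$. First, I would construct two $\E^\star$-consistent worlds $(\N_0, \R_\bot)$ and $(\N_1, \R_\bot)$ that agree everywhere that $\V$, $\xA$, and $\P$ can observe, but for which $\T_\enterPwd^{\N_b,\R_\bot}$ produces different outputs. Concretely: take $\R_\bot$ to immediately halt on every method call (so $\R.\pwd()$ produces $\bot$ or simply no output — and crucially $\xA$'s calls to $\R$ must still ``produce some output'' by demonstrability, so I should take $\R_\bot$ to return some fixed dummy value, e.g. the empty string, on every call). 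Let $\N_b$ place at $\loc{D}$ a device $M_b$ that implements $D_\enterPwd$ with $M_b.\pwd = p_b$ for two distinct passwords $p_0 \ne p_1$, both of which differ from the dummy value returned by $\R_\bot$, and with $M_b.\msg = \msg_b$ for two distinct messages $\msg_0 \ne \msg_1$ (both $\ne \bot$). Since $\R_\bot$ never supplies either password, in \emph{either} world the device stays locked: $M_b.\decrypted = \False$ throughout any execution, so $M_b.\Read() = \bot$ at all times. Thus the executions $\exec{\V^{\N_b}}{\A^{\N_b,\R_\bot}}$ are identically distributed for $b=0,1$ for \emph{every} action $\A$ (the only externally visible difference between $M_0$ and $M_1$ — the stored password and message — is never exposed without a correct $\prompt$ call), so in particular the transcript $\trans$ and the post-interaction state $\N'$ are identical across $b$.

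Now I would derive the contradiction. Because $\V$ is demonstrable w.r.t.\ $\E^\star$ and $(\N_b,\R_\bot)$ is $\E^\star$-consistent, $\V$ accepts $\xA$ w.r.t.\ $(\N_b,\R_\bot)$; so $\xA$ is $\V$-conforming for these worlds. By entailment applied with $\A = \xA$, we need $\P^{\N'_b}(\trans_b) \equiv \T_\enterPwd^{\N_b,\R_\bot}()$ for both $b$. But $\T_\enterPwd^{\N_0,\R_\bot}()$ and $\T_\enterPwd^{\N_1,\R_\bot}()$ differ: tracing $\T_\enterPwd$, it sets $x \gets \R.\pwd()$ (the dummy value, independent of $b$), calls $\N[\loc{D}].\prompt(x)$ which does \emph{not} unlock $M_b$ since $x \ne p_b$, then reads $\msg' \gets \N[\loc{D}].\Read() = \bot$ and returns $\bot$ — \emph{wait}, this gives the same output $\bot$ in both worlds, so this particular pair does not separate $\T_\enterPwd$. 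I need to be more careful: I should instead pick the two worlds so that $\R_\bot$'s dummy output \emph{does} happen to equal $p_0$ in world $0$ but not $p_1$ in world $1$ — but $\R_\bot$ is fixed, so instead vary whether $D$'s password equals the dummy value. Take $M_0.\pwd = \varepsilon$ (the dummy value) and $M_1.\pwd = 1$, with $M_0.\msg = \msg_0$, $M_1.\msg = \msg_1$, $\msg_0 \ne \msg_1$. Then in $\T_\enterPwd^{\N_0,\R_\bot}$: $x = \varepsilon$, $\prompt(\varepsilon)$ unlocks $M_0$, and it returns $\msg_0$. In $\T_\enterPwd^{\N_1,\R_\bot}$: $x = \varepsilon \ne 1$, $M_1$ stays locked, returns $\bot$. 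So $\T_\enterPwd$ outputs differ across $b$. Meanwhile — and this is the crux — for the executions $\exec{\V^{\N_b}}{\xA^{\N_b,\R_\bot}}$ to be indistinguishable I need $\xA$ (and $\V$) to never send a message to $\N[\loc{D}]$ that unlocks $M_0$; but $\xA$ might call $\prompt(\R.\pwd()) = \prompt(\varepsilon)$ which \emph{does} unlock $M_0$. So the two executions are \emph{not} obviously identical. \textbf{The main obstacle} is exactly this: I must argue that \emph{whatever} conforming $\A$ the respondent uses, $\P$ cannot recover $M_b.\msg$ in world $0$, and the cleanest route is probably not to fix a single $\A$ but to use the $\R_\bot$-based argument more abstractly — show that for the trivial $\R_\bot$ there is \emph{no} $\V$-conforming action that causes the verifier to learn enough to distinguish, OR, alternatively, argue directly: $\V$ must accept the do-nothing exemplar against $(\N,\R_\bot)$ for \emph{all} $\E^\star$-consistent $\N$, hence $\V$ must accept even when the device is never unlocked; then the do-nothing action $\A_{\emptyset}$ is $\V$-conforming; running $\A_\emptyset$ leaves $\N' = \N$ with the device locked and gives a transcript with no information about $\msg$; so $\P^{\N'}(\trans)$ reading from a locked device cannot produce $M.\msg$, yet $\T_\enterPwd^{\N,\R_\bot}$ can (in the world where $\varepsilon$ is the password). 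This contradiction — with $\A = \A_\emptyset$ rather than $\A = \xA$ — is the path I would write up, and it reduces the whole argument to: (i) demonstrability forces $\V$ to accept $\A_\emptyset$ on the $\R_\bot$ worlds, which I would establish by noting the exemplar $\xA$ is itself $\E^\star$-consistent-acceptable and then observing $\xA$'s only lever ($\prompt(\R.\pwd())$) has no guaranteed effect, so one can take $\xA = \A_\emptyset$ WLOG — or more carefully, show that if $\V$ is demonstrable it remains demonstrable with the do-nothing exemplar — and (ii) information-theoretic independence of $\trans, \N'$ from $(\pwd,\msg)$ when the device is never unlocked.

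Step (i) is where I expect the real work: demonstrability only promises \emph{some} exemplar, not the trivial one, so I need the sub-claim that under $\E^\star$ the exemplar's $\prompt(\R.\pwd())$ call is ``inert'' — formally, that for every $\E^\star$-consistent $(\N,\R)$ there is an $\E^\star$-consistent $(\N,\R_\bot)$ with the same device, and demonstrability requires $\V$ to accept $\xA$ against \emph{this} pair too; against $(\N,\R_\bot)$ the call $\prompt(\R_\bot.\pwd())$ feeds a fixed dummy, which for a suitably chosen $\E^\star$-consistent device does not unlock it, so the transcript against $(\N,\R_\bot)$ is the same as against a do-nothing action. Then $\A_\emptyset$ is $\V$-conforming against that $(\N,\R_\bot)$, and step (ii) finishes. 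Chaining these, entailment w.r.t.\ $\E^\star$ would force $\P$ to output $M.\msg$ from a transcript and nature-state that are statistically independent of $M.\msg$ (ranging over $\E^\star$-consistent devices with that fixed transcript), which is impossible — giving the claim.
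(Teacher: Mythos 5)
You correctly isolate the two load-bearing observations---that $(\N,\R_\bot)$ remains $\E^\star$-consistent for any consistent $\N$, and that the contradiction should come from a conforming action whose execution is independent of the secret---but the comparison you ultimately set up does not close, and the patch you propose rests on an unproven (and generally false) sub-claim. Your final argument fixes the respondent at $\R_\bot$ and varies the \emph{device} (password equal to the dummy value $\varepsilon$ in one world, different from it in the other). With the action that is actually guaranteed to be conforming---the exemplar with $\R_\bot$ hardcoded, call it $\A_\bot$---this yields no contradiction: in the world where the password equals $\varepsilon$, $\A_\bot$'s call to $\prompt(\varepsilon)$ unlocks the device, so $\P$ can simply read the message and match $\T_\enterPwd$; in the other world both $\P$ and $\T_\enterPwd$ yield $\bot$. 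You notice this and retreat to the do-nothing action $\A_\emptyset$, but your step (i)---that demonstrability lets you treat $\A_\emptyset$ as $\V$-conforming---does not hold: $\V$ may require the action to send it a particular message, and ``the device state is unchanged'' is not the same as ``the execution is identical to doing nothing,'' since the exemplar may interact with $\V$ and with other locations in nature. The gap sits exactly where you flag ``the real work,'' and it cannot be closed along the route you describe.

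The paper's proof makes the opposite comparison: it fixes a single nature $\N$ (one device, one password, one message $D.m\neq\bot$) and varies the \emph{respondent}, pitting $\R_\enterPwd$ (whose $\pwd()$ returns $D.\pwd$) against $\R_\bot$; both pairs $(\N,\R_\enterPwd)$ and $(\N,\R_\bot)$ are $\E^\star$-consistent precisely because the starred assertion was dropped. Taking $\A_\bot$ to be a conforming action with $\R_\bot$ hardcoded, the execution $\exec{\V^\N}{\A_\bot^{\N,\R}}$---and hence $\P$'s output---is literally independent of which respondent is the true one, while $\T_\enterPwd^{\N,\R_\enterPwd}$ outputs $D.m\neq\bot$ and $\T_\enterPwd^{\N,\R_\bot}$ outputs $\bot$. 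So no single $\P$ can be correct for both, and no lemma about the do-nothing action is needed. The respondent who actually knows the password never appears in your argument, and that is the missing ingredient: the impossibility is driven by the government's uncertainty about $\R$, not about the device.
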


\begin{proof}
Let $(\N,\R_\enterPwd)$ be $\E_\enterPwd$-consistent. Because $\E_\enterPwd \succeq \E^\star$,  $(\N,\R_\enterPwd)$ is $\E^\star$-consistent. Hence $(\N,\R_\bot)$ is also $\E^\star$-consistent, where $\R_\bot$ immediately halts on all inputs. Fix $\V$ demonstrable and  $\A$ that is $\V$-conforming, all with respect to $\E^\star$. Because $\R_\bot$ is $\E^\star$-consistent, the action $\A_\bot$ which emulates $\A$ with $\R_\bot$ is also $\V$-conforming with respect to $\E^\star$.

For any post-processor $\P$, the output of $\P$ after $\exec{\V^\N}{\A_\bot^{\N,\R}}$ is independent of $\R$.
On the other hand, $\T_\enterPwd^{\N,\R_\enterPwd}$ always outputs $D.m\neq \bot$, while $\T_\enterPwd^{\N,\R_\bot}$ always outputs $\bot$.
Hence, there exists $\E^\star$-consistent $(\N,\R)$ such that $\P$ does not always output the result of $\T_\enterPwd^{\N,\R}$ (e.g., $\R$ is an appropriately chosen distribution over $\R_\bot$ and $\R_\enterPwd$).
Hence, $\T_\enterPwd$ is not entailable with respect to $\E^\star$.
\end{proof}

\subsection{Deniable encryption}
\label{sec:enter-password:deniable}
\label{sec:enter-password:deny}

We continue the example from Section~\ref{sec:enter-password} and Figure~\ref{algs:enter-a-password} wherein the government wishes to compel decryption by entering a password. Now we consider the case of deniable encryption.
Deniable encryption introduces a \emph{duress} password which, when entered, allows a device to be decrypted to something other than its ``true'' contents. For illustrative purposes, we analyze a simple but unrealistically powerful form of deniable encryption $D_\denysub$, wherein the duress password can be used to overwrite the device's contents arbitrarily.

\eptdDdeny

In Claim~\ref{claim:entailment-enter-password}, we proved entailment only with respect to the strong evidence $\E_{D_\enterPwd}$ that states that the encryption is \emph{not} deniable.
This is not a fluke. 
Consider the idealized version of deniable encryption in $D_\denysub$. Because $D_\enterPwd \prec D_\denysub$, it is entirely consistent with $\E_\enterPwd$. So the possibility that the actual encrypted device uses a form of deniable encryption affects entailment. Absent  stronger evidence that the device is not deniable (i.e.: $\E_{D_\enterPwd} \succeq \E_\enterPwd$), $\T_\enterPwd$ is not entailable, and the message that will ultimately be read from $D$ is not meaningful.

While entailment fails, demonstrability does not. The same $\V_\enterPwd$ is demonstrable with the same exemplar action $\xA_\enterPwd$. Using our framework, a respondent may ultimately be compelled to enter a password \emph{even if the government knows the encryption is deniable}---but the respondent would be free to use the duress password.
This is in contrast to prior legal approaches to this fact pattern, as discussed at the end of Section~\ref{sec:intro}.

\subsection{Opening a commitment}
\label{ex:decommitment}

Next, we consider \emph{commitments}---cryptographic objects related to, but distinct from encryption.
In this section, we ask: when can the government compel the opening of a commitment?
In our framework, the answer is: when the government has the commitment and knows that the respondent can open it.
Note that this answer generalizes the claims about committing encryption from Section~\ref{sec:enter-password}.

At a high level, commitments are like lockboxes, where each box has a bespoke lock and key. Once you put a secret message into the lockbox, nobody can read the secret (hiding), and nobody can change the secret (binding). Whoever has the key can open the box in front of others, thereby convincing them that box always contained that particular secret message.
An important difference between commitments and encryption is this public-opening property. For instance, deniable encryption schemes are specifically non-committing: it can be decrypted in different ways. Commitments can only be opened a single way.

Moving from the physical analogy to the computational setting, commitments schemes provide an algorithm to \emph{commit} to a secret $x$, producing the \emph{commitment} $c$ and the \emph{decommitment} $d$. \emph{Opening} the commitment means revealing $d$ and $x$. Given $c$, $d$, and $x$, anybody can \emph{check} that the opening was done correctly.

\begin{definition}[String commitment scheme]
\label{defn:commitment}
Let $(\Com, \Check)$ be a string commitment scheme where $(c,d) \gets \Com(x; r)$ yields a pair of strings $(c,d)$ called the \emph{commitment} and \emph{decommitment} to message $x$ respectively, and $\Check(c', d', x')$ returns either 1 or 0.  For all $r$, if $(c, d) \gets \Com(x; r)$ then $\Check(c, d, x) = 1$.

\textbf{Binding (informal):}
A commitment scheme is \emph{binding} if it is difficult to find $(x, x', d, d', c)$ such that $x \ne x$, $\Check(c, d, x)=1$, and $\Check(c, d', x') = 1$.
A commitment scheme is \emph{perfectly binding} if there does not exist such a tuple $(x, x', d, d', c)$.

\textbf{Hiding (informal):}
Suppose $(c, d) \gets \Com(x; r)$.
A commitment scheme is \emph{hiding} if it is difficult to find $x$ given only $c$.
A commitment scheme is \emph{perfectly hiding} if the distribution of $c$ is identical for any $x, x'$ input into $\Com$ (over the choice of $r$).
\end{definition}

\triplealgorithm{\decomevidence}{\decomverifier \decompost}{\decomtarget \decomexemplar}{ \label{fig:decommitment}Evidence, Verifier, Target Action, and Exemplar Action for decommitment (Section \ref{ex:decommitment}).
We define $\E_\binding \succeq \E_\exDecom$ to be the stronger evidence where the commitment scheme on the starred ($\star$) line is \emph{perfectly binding}.}

Consider the Evidence $\E_{\exDecom}$, Verifier $\V_{\exDecom}$, Target Action $\T_{\exDecom}$, and Exemplar Action $\xA_{\exDecom}$ given in Figure \ref{fig:decommitment}.
$\E_\exDecom$ states that that the respondent is able to produce a valid opening to the commitment $c$ at location $\N[\loc{comm}]$. The verifier $\V_\exDecom$ receives $(x', d')$ from the respondent (by way of a conforming action $\A$), fetches $c$ from nature, and runs the commitment's $\Check$ algorithm. The exemplar $\xA$ outputs $(x,d)$ produced by $\R$.

\begin{claim}$\V_{\exDecom}$ is demonstrable with respect to $\E_{\exDecom}$ with exemplar action $\xA_{\exDecom}$.\end{claim}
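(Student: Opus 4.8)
The plan is to reuse, almost verbatim, the two-step template that established demonstrability of $\V_\enterPwd$ (Section~\ref{sec:enter-password}), now instantiated with the exemplar action $\xA_{\exDecom}$ of Figure~\ref{fig:decommitment}. First I would dispatch the efficiency caveat of Remark~\ref{rem:efficiency}: $\xA_{\exDecom}$ merely queries $\R.\methodstyle{secret}()$ and $\R.\methodstyle{decom}()$ once each and forwards the pair $(x,d)$ to $\V$, so it is efficient provided $\R$'s own methods are. Then I would fix an arbitrary $\E_{\exDecom}$-consistent pair $(\R,\N)$ and check the two conditions of Definition~\ref{defn:demonstrability}.

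For condition~(1), I would argue that both calls $\xA_{\exDecom}$ makes to $\R$ return a value. These are exactly the methods $\R.\methodstyle{secret}()$ and $\R.\methodstyle{decom}()$ named in $\E_{\exDecom}$, and the (non-starred part of the) assertion in $\E_{\exDecom}$ is the predicate $\Check(c,d,x)=1$ with $x \gets \R.\methodstyle{secret}()$ and $d \gets \R.\methodstyle{decom}()$. An $\E_{\exDecom}$-consistent $(\R,\N)$ must satisfy this assertion, which is only meaningful if those calls produce outputs; hence condition~(1) holds. (As the discussion after Definition~\ref{defn:demonstrability} notes, this is precisely how the evidence is forced to pin down $\R$'s behavior.)

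For condition~(2), I would trace $\exec{\V_{\exDecom}^{\N}}{\xA_{\exDecom}^{\N,\R}}$: the action computes $x \gets \R.\methodstyle{secret}()$, $d \gets \R.\methodstyle{decom}()$ and sends $(x,d)$; the verifier reads $c \gets \N[\loc{comm}].\methodstyle{read}()$ and outputs $\Accept$ iff $\Check(c,d,x)=1$. The one point needing care is that the value $c$ the verifier checks against is the \emph{same} $c$ the evidence's assertion speaks about: $\N[\loc{comm}]$ is a read-only oracle and neither $\R$ nor $\xA_{\exDecom}$ ever writes to $\N$, so its contents are unchanged and a repeated $\methodstyle{read}()$ returns the same string. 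With that in hand, $\E_{\exDecom}$-consistency delivers $\Check(c,d,x)=1$ directly, so $\V_{\exDecom}$ accepts. Because demonstrability demands acceptance with probability~$1$, I would also note that the assertion in $\E_{\exDecom}$ must hold with probability $1$ over the coins of $\R$ and $\N$, so randomized $\R.\methodstyle{secret}(),\R.\methodstyle{decom}()$ pose no difficulty.

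I do not expect any real obstacle: the statement is pure bookkeeping, and the only subtlety is the ``same $c$'' observation above. It is worth flagging --- as a sanity check against the later entailment claims --- that this proof never invokes the starred hypothesis $(\star)$ of $\E_{\exDecom}$ (that $(\Com,\Check)$ is a commitment scheme, let alone a binding one); demonstrability needs only that $\Check$ is the algorithm the verifier runs. Binding will enter only when one wants to \emph{entail} the target action $\T_{\exDecom}$, i.e., to argue that the $x$ the verifier accepts is the ``true'' committed message.
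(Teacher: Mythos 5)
Your proposal is correct and follows essentially the same two-step argument as the paper's proof: the calls to $\R.\methodstyle{secret}()$ and $\R.\methodstyle{decom}()$ are exactly the methods the evidence specifies, and the assertion $\Check(c,d,x)=1$ in $\E_{\exDecom}$ forces $\V_{\exDecom}$ to accept $\xA_{\exDecom}$. Your added observations (the read-only oracle guaranteeing the same $c$, and the fact that the starred binding hypothesis is not used) are just careful elaborations of what the paper leaves implicit --- the latter point is in fact stated by the paper immediately after its proof.
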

\begin{proof}
If $\N,\R$ are $\E_{\exDecom}$-consistent, then $\xA_{\exDecom}$ will result in $\V$ accepting, because the Evidence states that $\Check(C.\Read(),$ $\R.decom(),$ $\R.secret()) = 1$.  We know $\R$ has the ability to perform this action, because it involves only a read to $\R.secret()$ and $\R.decom()$ which were specified in $\E_{\exDecom}$.  Thus, $\V_{\exDecom}$ is demonstrable with witness action $\xA_{\exDecom}$.  
\end{proof}

In fact, this holds even without binding. That is, dropping the starred ($\star$) line in $\E_\exDecom$ in Figure~\ref{fig:decommitment} does not affect demonstrability.

If the government additionally has evidence that the commitment is perfectly binding, then $\V_\exDecom$ entails $\T_\exDecom$ which discloses $\R.\secret()$.\footnote{If the commitment scheme is only computationally binding, then we expect entailment would still hold against a computationally-bounded respondent along the lines of Claim~\ref{claim:hash}.}
Let $\E_\binding \succeq \E_\exDecom$ represent this stronger evidence. (Both $\T_\exDecom$ and $\E_\binding$ are defined in Fig.~\ref{fig:decommitment}.

\begin{claim}\label{claim:decom:entailment}
If $\Com$ is perfectly binding, then
$\V_{\exDecom}$ entails $\T_{\exDecom}$ under $\E_\binding$.
\end{claim}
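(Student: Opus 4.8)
The plan is to exhibit an explicit efficient post-processor $\P_{\exDecom}$ and argue that, under $\E_\binding$, it reconstructs the output of $\T_{\exDecom}$ regardless of which $\V_{\exDecom}$-conforming action $\A$ the respondent performs. Since $\T_{\exDecom}$ outputs $\R.\secret()$, and by $\E_\exDecom$-consistency we have $\Check(c, \R.\decom(), \R.\secret()) = 1$ where $c = \N[\loc{comm}]$, the value $\R.\secret()$ is (under perfect binding) the \emph{unique} message that has any valid opening with respect to $c$. So the post-processor needs only to recover \emph{some} valid opening from the transcript and read off the message. I would define $\P_{\exDecom}^{\N'}(\trans)$ to parse the first round of $\trans$ as the pair $(x', d')$ sent by $\A$ to $\V$ (as in Fig.~\ref{fig:decommitment}), and simply return $x'$.

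The main steps, in order: (1) Observe that if $\A$ is $\V_{\exDecom}$-conforming then $\exec{\V_{\exDecom}^{\N}}{\A^{\N,\R}} = \Accept$ with probability 1; by construction of $\V_{\exDecom}$ this means the pair $(x', d')$ it receives satisfies $\Check(c, d', x') = 1$ where $c \gets \N[\loc{comm}].\Read()$. (2) Note that $\V_{\exDecom}$ never writes to $\N$ — it only reads $\N[\loc{comm}]$ — so $c$ is unchanged between the original state $\N$ and the post-interaction state $\N'$; in particular $\P$ can recover (or already sees in $\trans$) the same commitment value. (3) On the target side, by $\E_\exDecom$-consistency of $(\R, \N)$ we have $\Check(c, \R.\decom(), \R.\secret()) = 1$, so both $(x', d')$ and $(\R.\secret(), \R.\decom())$ are valid openings of $c$. (4) Invoke perfect binding of $\Com$ (the starred line, strengthened in $\E_\binding$): there is no tuple $(x, x', d, d', c)$ with $x \neq x'$ and $\Check(c,d,x) = \Check(c,d',x') = 1$; hence $x' = \R.\secret()$. (5) Conclude $\P_{\exDecom}^{\N'}(\trans) = x' = \R.\secret() = \T_{\exDecom}^{\R,\N}()$, and $\P$ is clearly efficient (it just parses and returns), so $\V_{\exDecom}$ entails $\T_{\exDecom}$ under $\E_\binding$. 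If randomness is present, the argument above holds pointwise for every fixing of the random tapes, since perfect binding is an unconditional combinatorial statement and conformity gives acceptance with probability 1.

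The one subtlety I would be careful about — and which I expect to be the main obstacle — is making sure $\P$ actually has access to the right commitment value to invoke binding. The definition of entailment gives $\P$ the transcript $\trans$ of the $\V$–$\A$ interaction and oracle access to the post-state $\N'$. Since $\V_{\exDecom}$ reads $c$ from $\N[\loc{comm}]$ during its run, $c$ appears in $\trans$ (as part of $\V$'s view), and moreover $\N'[\loc{comm}].\Read()$ still returns $c$ because $\V$ performs no writes and $\A$'s interaction with $\N$ is irrelevant to this read-only location under the evidence; either way $\P$ can obtain $c$. One should double-check that a conforming $\A$ cannot have tampered with $\N[\loc{comm}]$ in a way that changes what $\V$ reads versus what the target's $\Check$ (implicitly, via the evidence assertion) refers to — but since $\loc{comm}$ is a read-only location and the evidence assertion $\Check(c, \R.\decom(), \R.\secret()) = 1$ is about the value $c \gets \N[\loc{comm}]$ at the time the evidence holds, and $\V$ reads that same location, the two coincide. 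With that pinned down, the perfect-binding step closes the argument cleanly.
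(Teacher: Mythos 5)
Your proposal is correct and takes essentially the same route as the paper: the same post-processor (parse the first round of the transcript as $(x',d')$ and return $x'$), the same appeal to the read-only nature of $\N[\publicdatastyle{commLoc}]$ to identify the commitment the verifier checks with the one the evidence refers to, and the same use of perfect binding to force $x' = \R.\methodstyle{secret}()$. Your explicit care about where $\P$ obtains $c$ and the pointwise treatment of randomness is slightly more thorough than the paper's writeup, but the argument is the same.
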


\begin{proof}
Recall that $\E_{\exDecom}$ reads $c$ from $\N[\privatedatastyle{commLoc}]$.  $\E_{\exDecom}$ asserts that for $d \leftarrow \R.\methodstyle{decom}()$ and $x \leftarrow \R.\methodstyle{secret}()$, it must be true that $\Check(c, d, x)=1$.  Recall that $\V_{\exDecom}$ reads $c' \leftarrow \N[\E_{\exDecom}.\privatedatastyle{commLoc}]$, and since this oracle is read-only, we have that $c' = c$.
Since the commitment scheme is perfectly binding, there is no $(x, x', d, d', c)$ such that $x \ne x'$, and both $\Check(c, d, x)=1$ and $\Check(c, d', x')=1$.  Thus, $x' = x$ and $d' = d$.

Letting $\N'$ be the state of Nature after running $\A$, the post-processor $\P_{\exDecom}^{\R,\N'}$ which returns the $x'$ sent by $\A$ will always return $\R.secret()$, thus entailing $\T_{\exDecom}$.
\end{proof}

A corollary of Claim~\ref{claim:decom:entailment} is that, under the same evidence, the government can entail arbitrary computation on the committed secret $\R.\secret()$. For any $f$, $\T_f$ is entailable with respect to $\E_\exDecom$: 
$$\T_f^{\N,\R} := \text{\textbf{\textrm{return}}}~ f(\R.\secret())$$

\begin{remark}
In this section, we showed that compelling the opening of a unknown but committed secret is demonstrable and entailable if the scheme is perfectly binding.
\iffull
    In Appendix~\ref{ex:random}, we show that there is no way to entail a \emph{commitment} of an unknown secret.
\else
    In the full version, we show that there is now way to entail a \emph{commitment} of an unknown secret~\cite{fullversion}.
\fi
This is reassuring.
If it were not true, then $\G$ could  entail commitment-then-open of arbitrary unknown  secrets.
It is worth noting that there are many demonstrable $\V$s that have the \emph{exemplar action} of committing to a secret value.  However, the fact that none of those $\V$s \emph{entail} the action of committing to a secret value means that those $\V$s would also be satisfied by other actions, possibly for example committing to the all-zeros string.
\end{remark}

\section{Comparison to prior work}
\label{ssec:usenix}

\subsection{Brief comparison to other legal scholarship}\quad
Here, we compare
our verification-centric view
to the body of legal scholarship on compelled decryption.
On the one hand, scholars like
Kiok~\cite{kiok} and Sacharoff~\cite{sacharoff} posit that the government can compel decryption only if they know the desired files with ``reasonable particularity,'' and 
Winkler~\cite{winkler} argues that the self-incrimination privilege provides an absolute defense against compelled decryption.
On the other hand, Cohen and Park \cite{cohen-park}, Kerr~\cite{kerr2016,kerr2018compelled,kerr2020}, McGregor~\cite{mcgregor}, and Terzian~\cite{terzian2013fifth,terzian-dec} deem compelled decryption to be a foregone conclusion in some contexts,
either to re-balance government power due to the widespread use of encryption or
because all the testimony implicit in the act of decryption (e.g., ownership, knowledge of password) is foregone.

Our framework largely agrees with the latter interpretation of the foregone conclusion, by considering \emph{all acts} involved in the performance of decryption rather than simply the files produced.
That said, our framework reaches a novel conclusion about the compellability of ``deniable encryption'' schemes, which include special duress password(s) that can reveal different files.
Here, Kerr and Cohen-Park reach opposite conclusions.
By separating demonstrability from entailment, our framework reaches a ``middle ground'' where decryption can be compelled, but the respondent can use any password unless the files are known with reasonable particularity.

\subsection{Comparison to Scheffler and Varia}
\label{sub:compare-sv}
The work that is most related to ours is the paper of Scheffler and Varia \cite{USENIX:SchVar21}, which also contributes a cryptography-inspired candidate definition of the foregone conclusion doctrine.
This work and \cite{USENIX:SchVar21} agree on landmark Supreme Court and circuit court decisions to date about the foregone conclusion doctrine \cite{fisher_v_us,doe_v_us_ii,us_v_hubbell,us_v_greenfield,in_re_proceedings,in_re_subpoena,us_v_ponds,in_re_subpoena_2nd}. As a general rule of thumb that we expect to hold for many realistic fact patterns, an action that is a foregone conclusion under \cite{USENIX:SchVar21} is likely entailable in our current framework. A notable exception is the inability for our current framework to entail random sampling from a probability distribution (Theorem~\ref{thm:random}).

That being said, the two works differ in several legal and computer science aspects. 

First, this work requires the government to specify how it will \emph{verify} the respondent's act, thus ensuring that the government is not ``relying on the truthtelling'' \cite[l. 411]{fisher_v_us}.  In contrast, \cite{USENIX:SchVar21} requires the government to specify how it \emph{could have performed} the respondent's act using the cryptographic idea of simulation but without access to the ``contents of [the respondent's] mind'' \cite[l. 421]{fisher_v_us}, thus ensuring that the respondent's production ``adds little or nothing to the sum total of the Government's information'' \cite[l. 411]{fisher_v_us}.

Second, the two works take different approaches to conformity.  Scheffler and Varia dealt with the implicit testimony from conformity using simulation.  In contrast, this work defines away the problem of conformity by allowing the respondent to perform \emph{any} conforming action.

Third, the modeling in this work allows modeling more scenarios than \cite{USENIX:SchVar21}. Scheffler and Varia focused on acts of \emph{production}, and involved the simulation of \emph{what was produced}.  Our work, on the other hand, focuses on acts of either production or performance, and can analyze actions taken that have impact on the world even if no item is produced to the court as a result.  To accomplish this technically, rather than modeling the world as a collection of passive strings such as a ciphertext stored on a smartphone's hard drive, we model dynamic, stateful, interactive processes such as the phone itself and all of the ways that one can interact with it.

The archetypal compelled decryption scenario asks: When can a respondent be compelled to enter a password into an encrypted device in the government's possession? Scheffler and Varia model a scenario where the respondent performs password-based decryption and produces the plaintext. But Cohen and Park argue that the ``decryption'' scenario is doctrinally distinct from the ``production of plaintext'' \cite{cohen-park}.  In contrast to \cite{USENIX:SchVar21}, our current model allows us to consider either scenario directly.

\section{Conclusion \& Discussion}

In summary, we adopt a view of implicit testimony that treats ability and conformity as equally important, and we present a new verification-centric approach to compelling acts while constructively satisfying the foregone conclusion doctrine. On the technical side, we formally define the concepts of demonstrability and entailment and explore the compellability of acts of cryptography within our framework.
Our goal in this project was CS-\emph{and}-law: rigorous computer science that substantively engages with and contributes to legal thought.
In this section, we take a step back and reflect on what seemed to work for us.

This paper is the culmination of a five-year research agenda that alternated between primarily legal and primarily technical contributions.
First, AC in collaboration with Sunoo Park produced a law-first analysis grounded in their knowledge of cryptography; they showed how the application of the foregone conclusion doctrine can depend
on the nuances of the underlying technology \cite{cohen-park}.
Next, SS and MV produced a technical-first analysis inspired by the law; they
gave a more restricted simulation-based formal framework and
analyzed how the compellability of different cryptosystems under it \cite{USENIX:SchVar21}.
All three authors then worked to reconcile the differing accounts of those two works, resulting in a new legal analysis that introduced the idea that ability and conformity are testimony implicit in acts of performance \cite{plsc}.
The present paper represents our effort to turn these ideas into a formal framework; more than just a formalism, that effort spawned our verification-centric approach and the concepts of demonstrability and entailability.

In this project, neither side was subservient to the other. We did not start with a legal view or a set of technical outcomes to which we molded our framework.

Instead, our meta-level approach in this project was an iterative, alternating application of legal and technical thinking.
At a high level, our goal was to fill in a gap in our understanding of the foregone conclusion doctrine: namely, it's requirement that the government may not ``rely on the truthtelling'' of the respondent.
First, we would state a non-mathematical version of the requirement based on our understanding of the case law and fact patterns considered so far. 
Second, we attempted to formalize that statement in a technically precise and sound way, building from the framework of \cite{USENIX:SchVar21}.
Third, we would test the result against many fact patterns, whether uncontroversial act-of-production cases, existing compelled decryption cases, the cryptographic primitives studied in this paper, or hypotheticals specifically designed to stress the framework.
Fourth, we would ask whether the formalism uncovered a new aspect of the legal question that could help refine our thinking.
After many many iterations, we converged on the ideas in this paper.

We do not know what legal questions are amenable to the sort of CS-and-law analysis we undertake in this paper. We highlight some properties that we believe make compelled decryption fruitful for a cross-disciplinary study, in the hopes that it may help others find other fertile grounds.
First, law alone offers no obvious answers. Scholars are in disagreement and state supreme courts are split. And as Cohen and Park argue, forgone conclusion analyses can depend on technical details, and reasoning by analogy can only go so far.
Second, there are clear connections to well-studied concepts in cryptography and computer science. Of course the very artifacts in question issue are technological and cryptographic. And at a conceptual level, cryptographic community has spent decades stress-testing notions of simulation and verification. The intuitive connection to ideas in \emph{Fisher}'s --- adding to the government's knowledge, or relying on the respondent's truthtelling --- are immediate to a cryptographer (though making the intuition precise is not easy).
Third, the results of this project is of value to both communities, with the potential to affect future court cases and to spur new research directions in security and cryptography.

\section*{Acknowledgments}

Aloni Cohen and Mayank Varia were supported by the National Science Foundation under Grant No.\ 1915763 and by the DARPA SIEVE program under Agreement No.\ HR00112020021.
Mayank Varia was additionally supported by National Science Foundation Grants No.\ 1718135, 1801564, and 1931714.
Sarah Scheffler was supported by a Google Ph.D.\ Fellowship and the Center for Information Technology Policy at Princeton University.
We are grateful for the feedback from participants at the 2021 Privacy Law Scholar's Conference and 2022 CTIC Law \& Computer Science Roundtable.

\bibliographystyle{ACM-Reference-Format}
\balance
\bibliography{bib/extra,bib/abbrev3,bib/crypto,bib/usenix}

\appendix

\section{More compelled acts of computation}
\label{app:examples}

\iffull
\subsection{Two-factor authentication}
\label{ssec:mfa}

Many services and devices lock via \emph{two-factor authentication} (2FA) in which two ``forms'' of authentication are required.  A typical setting is for a service to require entering a passcode and entering an ephemeral code that the service sent to a second device controlled by the same individual.  
In this section, we extend the example in Section~\ref{sec:entailment:decryption} of decrypting a device by entering a password to require a code sent to a secondary device -- the location of which may not be known to the government, although they must know the respondent can access it. 

\begin{claim} \label{claim:mfa:demonstrable}
$\V_{\mfa}$ is demonstrable with respect to $\E_{\mfa}$ and exemplar action $\xA_{\mfa}$.
\end{claim}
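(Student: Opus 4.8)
The plan is to reuse, essentially verbatim, the two-part template from the earlier proof that $\V_\enterPwd$ is demonstrable, now instantiated with the two-factor machines of Section~\ref{ssec:mfa}. By Definition~\ref{defn:demonstrability} it suffices to put forward $\xA_\mfa$ as the exemplar and check, for every $\E_\mfa$-consistent pair $(\R,\N)$: (i) every method call $\xA_\mfa$ makes to $\R$ returns some output; and (ii) $\V_\mfa$ accepts $\xA_\mfa$ with respect to $(\R,\N)$, with probability $1$ over all coins. First, (i) is immediate: $\xA_\mfa$ invokes only $\R.\methodstyle{pwd}()$ and $\R.\findsecondary()$, and $\E_\mfa$ names both of these methods, so each returns on every consistent $\R$ --- and $\E_\mfa$ in fact pins down that $\R.\methodstyle{pwd}()$ equals $D.\pwd$.

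For (ii) I would fix an $\E_\mfa$-consistent $(\R,\N)$, write $M := \N[\loc{device}]$ and $M_S := \N[\R.\findsecondary()]$, and use the evidence's guarantees $D \prec M$, $S \prec M_S$, $\R.\methodstyle{pwd}() = D.\pwd$, and $D.\msg \neq \bot$. Then I would trace the exemplar step by step: $\xA_\mfa$ calls $M.\promptpw(\R.\methodstyle{pwd}())$; since $\R.\methodstyle{pwd}() = D.\pwd$ and $\promptpw$ lies in $D$'s specified method set, $M$ acts as $D$ --- it samples a fresh $\thecode$, forwards it to its secondary via $\setcode$, and sets $\gotpwd \gets \True$. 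Next $\xA_\mfa$ reads $c \gets M_S.\getcode()$, obtaining the code just stored, and calls $M.\promptcode(c)$, which (as $\gotpwd = \True$ and $c = \thecode$) sets $\decrypted \gets \True$. Finally $\V_\mfa$ reads $M.\Read()$, which by $D \prec M$ and the now-unlocked state returns $M.\msg = D.\msg \neq \bot$, so $\V_\mfa$ outputs $\Accept$. The only randomness on this path is the draw of $\thecode$, which is consumed and re-read before it can affect anything else, so acceptance holds with probability exactly $1$. (The stray ``send $(x,d)$ to $\V$'' line of $\xA_\mfa$ is vestigial, since $\V_\mfa$ reads only from $\N$; I would simply drop it.)

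The hard part will be justifying one link in that trace: that the code the primary writes during $\promptpw$ --- via an \emph{internal} call to its secondary device --- is the very code that $\xA_\mfa$ later reads back from $M_S = \N[\R.\findsecondary()]$, even though the government does not know the secondary's location and learns it only through $\R.\findsecondary()$. This forces me to be careful about (a) how machines in $\N$ address one another (the conventions of Section~\ref{sec:verification}) and (b) the exact reading of $\prec$ when a specified method makes calls outside its own method set. The assertion in $\E_\mfa$ that $D.\thecode == c$ after $D.\promptpw(D.\pwd)$ followed by $c \gets S.\getcode()$ is precisely the hypothesis that ties the primary's stored code to the secondary's returned value, so I would lean on it to close this gap. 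Once that identification is in place, the rest is a routine state-trace with no obstacles beyond those already present in the $\V_\enterPwd$ case.
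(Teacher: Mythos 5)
Your proof is correct and follows essentially the same route as the paper's: it exhibits $\xA_\mfa$, checks that its only calls to $\R$ are the methods $\R.\methodstyle{pwd}()$ and $\R.\findsecondary()$ declared in $\E_\mfa$, and then traces the state of the primary and secondary devices through $\promptpw$, $\getcode$, and $\promptcode$ to conclude $\Read() = D.\msg \neq \bot$. The ``hard link'' you flag --- that the code read back from $\N[\R.\findsecondary()]$ equals the one the primary stored --- is closed in the paper exactly as you propose, by invoking the final \textsf{assert after} clause of $\E_\mfa$, so no genuinely new argument is needed there.
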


\begin{proof}
To show demonstrability, we must show that $\R$ is capable of performing $\xA_{\mfa}$ and that $\V_{\mfa}$ accepts $\xA_{\mfa}$ with respect to all $\E_{\mfa}$-conforming $\R$ and $\N$.
The first half is trivial since $\xA_{\mfa}$ only calls methods of $\R$ that were declared in $\E_{\mfa}$.

To show the second half, first note that there is a device at location $\N[\loc{device}]$ that implements $D$. Call it $M^D$. We observe that since $\E_{\mfa}$ declares that $\R.\methodstyle{pwd}() == D.\pwd$, entering $\R.\methodstyle{pwd}()$ as input to $M^D.\promptpw$ is the same as entering $D.\pwd$, and thus the call to $\promptpw$ will set $\thecode$ and set $\gotpwd$ to $\True$.

The evidence also states that there is a device at another location $\N[\R.\findsecondary()]$ implementing $S$. Call it $M^S$. By the last assertion in $\E_{\mfa}$ we have that the call to $M^S.\getcode()$ will yield a code $c$ equal to $D.\thecode$.
Thus, entering that $c$ into $M^D.\promptcode$ will set $\decrypted$ to $\True$ (recall that $\gotpwd$ was set to $\True$ earlier).

Then when $\V_{\mfa}$ calls $M^D.\Read()$, the device will return $m \ne \bot$, and thus $\V_{\mfa}$ accepts as desired.
\end{proof}

\begin{claim}\label{claim:mfa:entailment}
$\V_{\mfa}$ entails $\T_{\mfa}$ with respect to the stronger evidence $\E_{D,S} \succeq \E_\mfa$ (i.e., that the specifications of $D$ and $S$ are full specifications).
\end{claim}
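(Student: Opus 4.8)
The plan is to mirror the proof of Claim~\ref{claim:entailment-enter-password}, with the extra bookkeeping needed to carry the ephemeral code through the primary device $D$ and the secondary device $S$. Write $M = \N[\loc{device}]$ and $M' = \N'[\loc{device}]$, and take as post-processor the machine $\P_\mfa$ that ignores its transcript argument, queries nature at the primary device's location, and returns $M'.\Read()$. To establish entailment I would show that for every $\E_{D,S}$-consistent $(\N,\R)$ and every $\V_\mfa$-conforming $\A$, both $\T_\mfa^{\N,\R}()$ and $\P_\mfa^{\N'}(\trans)$ always output $M.\msg$; since $\equiv$ is required to hold pointwise over all randomness tapes, I would carry out the argument with the randomness tapes of $\V_\mfa$, $\A$, $\T_\mfa$, $\P_\mfa$, $\R$, and every machine of $\N$ held fixed.

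For the target side, I would reuse the analysis from the proof of Claim~\ref{claim:mfa:demonstrable}. By $\E_\mfa$-consistency of $\R$ we get $x = \R.\pwd() = M.\pwd$, so $M.\promptpw(x)$ sets $M.\gotpwd = \True$, samples $M.\thecode$ (a fixed value once $M$'s randomness tape is pinned down), and forwards it via $S.\setcode$; the last assertion of $\E_\mfa$ --- that running $\promptpw$ with the password and then $\getcode$ on the secondary device yields matching codes, equivalently the code of the fully specified $S$ under $\E_{D,S}$ --- then guarantees $c \gets \N[\R.\findsecondary()].\getcode() = M.\thecode$, so $M.\promptcode(c)$ sets $M.\decrypted = \True$, and the final $M.\Read()$ returns $M.\msg \ne \bot$. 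For the post-processor side, conformity of $\A$ gives $\exec{\V_\mfa^\N}{\A^{\N,\R}} = \Accept$; since $\V_\mfa$ only invokes $M.\Read()$, which leaves state unchanged, $\N'$ coincides with the state of nature right after $\A$ halts, and conformity forces $M'.\Read() \ne \bot$ there. Because $\Read()$ returns $\bot$ unless $\decrypted = \True$, this yields $M'.\Read() = M'.\msg$, so it remains only to argue $M'.\msg = M.\msg$.

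That last equality is where the strengthened evidence is essential, and I expect it to be the only real obstacle. Under $\E_\mfa$ alone the device at $\loc{device}$ need only implement $D$ as a partial specification, so it might carry an undisclosed write-style method --- just as $D_\readwrite$ extends $D_\hybrid$ in Figure~\ref{algs:hybrid}, and as the deniable $D_\denysub$ extends $D_\enterPwd$ --- which a $\V_\mfa$-conforming $\A$ could invoke to overwrite $\msg$ with junk while still making $\V_\mfa$ accept; this is exactly the failure of entailment discussed in Sections~\ref{sec:entailment:partial-spec} and~\ref{sec:enter-password:deniable}. Promoting $D$ to a full specification (the $D$ half of $\E_{D,S}$) closes the gap: the fully specified $D$ has exactly the methods $\promptpw$, $\promptcode$, and $\Read$, none of which writes $\msg$, so $M'.\msg = M.\msg$ and $\P_\mfa$ outputs $M.\msg$ as needed --- while the $S$ half of $\E_{D,S}$ is what pins down the $\getcode$ chain on the target side. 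Everything else is the same routine state-tracking as in Claim~\ref{claim:entailment-enter-password}, the one genuinely new wrinkle being the passage of the ephemeral code between $D$ and $S$, which is deterministic once the randomness tapes are fixed.
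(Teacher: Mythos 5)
Your proposal is correct and follows essentially the same route as the paper: the same post-processor that simply returns $\N'[\loc{device}].\Read()$, the same two-sided argument that both $\P_\mfa$ and $\T_\mfa$ always output $D.\msg$ (reusing the demonstrability analysis to chain the code from $D$ through $S$ and back on the target side), and the same use of the full specification of $D$ to rule out any method that could overwrite $\msg$ on the post-processor side. No gaps.
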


\begin{proof}
Much like the proof given in Section \ref{sec:entailment:decryption}, we will show that both running $\P_{\mfa}$ on the transcript of $\exec{\V_{\mfa}^{\N}}{\A^{\R,\N}}$, and running the target action $\T_{\mfa}$, yield the same result of $D.m$.

For the left hand side, observe that if $\V_{\mfa}$ returns $\Accept$, then $m \leftarrow \N[\E.\loc{device}].\Read()$ was not $\bot$.  $\E_{D,S}$ states that $\N[\E.\loc{device}]$ exactly implements $D$ (and similarly for $S$).
Hence, the only way for $\Read()$ to return anything other than $\bot$ was for it to return $D.m$, which cannot be altered using any of the methods provided.  Thus, if $\V_{\mfa}$ returns $\Accept$, the post-processor $\P_{\mfa}$ must also return $D.m$.

Consider the right hand side.  Because $\E_{D,S}$ specifies that $\R.\methodstyle{pwd}() = D.\pwd$, this ensures that in the first line of $\T_{\mfa}$, $x = D.\pwd()$.  Moreover, the second line of $\T_{\mfa}$ calls exactly the given code of $D.\promptpw(D.\pwd)$, which also calls exactly $\setcode(D.\thecode)$ no matter what the randomness tape of $D$ is.  The third line of $\T_{\mfa}$ sets $c$ to exactly $D.\thecode$, and so the fourth line of $\T_{\mfa}$ calls $D.\promptcode(D.\thecode)$.  Thus, $\decrypted$ is set to $\True$ and when $\T_{\mfa}$ calls $\Read()$, it must return $D.\msg$, as desired.

Thus, $\V_{\mfa}$ entails $\T_{\mfa}$ as desired.
\end{proof}
\fi

\subsection{Preimage of hash}
\label{ssec:hash}

We imagine a scenario in which $\G$ wishes to compel $\R$ to provide a file and verify the file using its \emph{hash}.
This example is captured in Figure~\ref{algs:hash}, where we envision $h$ to be a fixed hash function like SHA-256. 
By construction of the evidence, $\V_{\hash}$ is demonstrable. 
Entailment is more complicated.

\triplealgorithm{\hashE}{\hashV \hashP}{\hashX \hashT}{\label{algs:hash} Verify a produced file using a hash (see Section \ref{ssec:hash})}

\begin{claim}\label{claim:hash}
For all $\E_{\hash}$-consistent $\N, \R$ and for all $\V_{\hash}$-conforming $\A$, either
\begin{enumerate}[(i)]
\item $\P^{\N'}(\trans) = \T^{\R,\N}$, or
\item $\P^{\N'}(\trans)$ and $T^{\R,\N} = \N[\R.\findfile()].\Read()$ are a hash collision for $h$.
\end{enumerate}
\end{claim}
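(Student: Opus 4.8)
The plan is to unwind every definition in play --- conformity of $\A$, the post-processor $\P_{\hash}$, the target $\T_{\hash}$, and $\E_{\hash}$-consistency --- and observe that the claim collapses to a one-line case split on whether two particular strings coincide. First I would fix $\E_{\hash}$-consistent $\N, \R$ and a $\V_{\hash}$-conforming $\A$, fixing all randomness tapes (per the paper's convention) so that every quantity below is well-defined.

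Next I would pin down the two strings being compared. On the target side, $\T^{\R,\N}$ simply returns $x := \N[\R.\findfile()].\Read()$, and $\E_{\hash}$-consistency asserts precisely that $h(x) = \E.y$. On the post-processor side, conformity of $\A$ means $\exec{\V_{\hash}^{\N}}{\A^{\N,\R}} = \Accept$; inspecting $\V_{\hash}$, the only way it can accept is if the single first-round message $x'$ it receives from $\A$ satisfies $h(x') = \E.y$. Since $\P_{\hash}$ does nothing but parse that first-round message out of $\trans$ and return it, $\P^{\N'}(\trans) = x'$. Hence I will have shown that $h\bigl(\P^{\N'}(\trans)\bigr) = \E.y = h\bigl(\T^{\R,\N}\bigr)$.

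The case split then finishes the argument: if $x' = x$ then $\P^{\N'}(\trans) = \T^{\R,\N}$, which is outcome (i); otherwise $x' \neq x$, and the distinct strings $\P^{\N'}(\trans) = x'$ and $\T^{\R,\N} = x$ collide under $h$, which is outcome (ii). These two cases are exhaustive.

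The one place that needs genuine care --- the closest thing to an obstacle here --- is justifying that the $x$ read by $\T^{\R,\N}$ is the \emph{same} string that the evidence's assertion is about, rather than some value that the execution of $\A$ (or of $\V_{\hash}$) could have altered. This follows because $\E_{\hash}$ declares $\N[\R.\findfile()]$ a read-only oracle, so its contents are invariant over the execution (and neither $\V_{\hash}$ nor $\P_{\hash}$ writes to nature in any case), together with the randomness-fixing convention, which makes $\R.\findfile()$ name a single well-defined location consistent between the evidence's call and $\T$'s call. I would also note that the claim is deliberately phrased as a disjunction: with no assumption placed on $h$, outcome (ii) cannot be ruled out, and this is exactly the point at which a later collision-resistance assumption would be invoked to upgrade the statement to a (computational) entailment of $\T_{\hash}$ against a bounded respondent.
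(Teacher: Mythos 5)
Your proposal is correct and follows essentially the same argument as the paper: identify the target's output $x_t$ with the evidence's asserted preimage, identify the post-processor's output $x_p$ with the first-round message that conformity forces to satisfy $h(x_p)=\E.y$, and split on whether $x_p = x_t$. The extra care you take about the read-only oracle is a reasonable elaboration but does not change the route.
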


\begin{proof}
Let $x_t = \N[\R.\findfile()].\Read()$ be the output of $\T^{\R,\N}$. The evidence ensures that $h(x_t) = \E.y$. 
Let $x_p$ be the output of $\P^{\N'}(\trans)$. By definition, $x_p$ is the message sent by $\rA$ to $\V$. Because $\rA$ is $\V$-conforming, $\V(x_p) = \Accept$ and hence $h(x_p) = \E.y$.
Hence $h(x_p) = h(x_t)$. 
Either $x_p = x_t$ or not, corresponding directly to the two cases in the claim.
\end{proof}

In words, this verifier \emph{almost} entails $\T_{\R,\N}$, except that the government has not ruled out the possibility that $\R$ knows a hash collision at $y$.
Note that we have only shown that \emph{this} verifier almost but not quite entails the action -- a different verifier might very well entail the target action.

Since, in the real world, we do not expect the respondent to be capable of producing a hash collision, we expect that $\R$'s only strategy will be to send the exact file $\G$ wanted to entail.  By compelling a preimage of a hash (which $\G$ knows with certainty $\R$ is capable of doing), this forces the respondent to either respond with the file or a hash collision, putting the ball in their court.  Although the government did not have evidence that pinned down every degree of freedom the respondent had, $\G$ did not need to make any additional assumptions to create the demonstrable $\V$ shown, nor did $\G$ need to implement a potentially-more-costly verifier to find a different way of compelling the file that directly entailed the action.
This demonstrates why the focus of our system is on \emph{demonstrability} rather than entailment.

\iffull
\triplealgorithm{\mfaD \mfaS}{\mfaE \mfaV}{\mfaX \mfaT \mfaP}{\label{algs:mfa} Multi-factor authentication (see Section \ref{ssec:mfa})}
\fi

\iffull
\subsection{Compelling encryption and commitments}
\label{ex:random}
\label{ex:commitment}

Throughout this section, we will use the following evidences:
$\E_{\secret}$ corresponds to the situation where $\G$ knows $\R$ has some secret, but has no knowledge of what it is and no way to verify it.  $\E_{\known}$ corresponds to a situation where $\G$ is able to learn the secret independently of $\R$ (by checking $\N[\loc{x}]$).

\secretE
\knownE

\subsubsection{Compelling encryption}

Let $\OTP(k', m)$ return the bitwise XOR of $k'$ and $m$ assuming they are the same length.
Table \ref{tab:otp} shows whether there is any way of entailing the target action $\T = \OTP(k', \R.x)$ depending on which key $k'$ is used and on the evidence. Of the scenarios considered, $\T$ is entailable only when the government specifies a fixed key $k'$ in $\T$ and is able to recover the plaintext $\R.x$ from nature.

\begin{table}
\begin{tabular}{rccc}
\toprule
& {$k' = \R.k$} & {$k'$ fixed in $\T$} & {$k$ sampled in $\T$} \\
\midrule
$\E_{\secret}$ & NE (Thm. \ref{thm:languagesofmind}) & NE (Thm. \ref{thm:languagesofmind}) &  NE (Thm. \ref{thm:random}) \\
$\E_{\known}$ &  NE (Thm. \ref{thm:languagesofmind})  & E (Rem. \ref{rem:knownstring})  & NE (Thm. \ref{thm:random})  \\
\bottomrule
\end{tabular}
\caption{Is $\T = \OTP(k, \R.x)$ entailable (E) or not entailable (NE), for the given setting of $\E$ and generation of $k$?}
\label{tab:otp}
\end{table}

\begin{remark}\label{rem:knownstring}
The action which deterministically chooses $k'$ and then returns $\OTP(k', \R.x)$ is entailable under $\E_{\known}$. The verifier $\V$ that deterministically sets the same $k$ and then sets $m \leftarrow \N[\loc{x}].\Read()$ is demonstrable and must always have exact equality with the output of this action, and so entails the action.
\end{remark}

\begin{remark}\label{rem:randomkey}
Note that if the one-time-pad was replaced with a randomized encryption scheme, by Theorem \ref{thm:random} compelling the action of honestly encrypting $m$ under the known $k$ (using randomness in $\T$) is \emph{not} entailable by Theorem \ref{thm:random}.  However, this action \emph{is} entailable if $\G$ additionally fixes the randomness used in the encryption scheme, because $\V$ may check the exact equality of the result.  This latter result stands in contrast with Theorem 4.5.1 from Scheffler-Varia \cite{USENIX:SchVar21}, which found that compelling encryption under a freshly sampled key \emph{was} compellable (under a different definition as described in Section \ref{ssec:usenix}).
\end{remark}

\subsubsection{Compelling commitments}

Together, the two claims in this section show that there is no way to entail a commitment of an unknown secret.
Let $(\Com, \Check)$ be a 
string commitment scheme as defined in Defn.~\ref{defn:commitment}.
Recall $\E_{\secret}$ above.
Suppose $\T_{\textsf{com},\$}$ is the action which first sets $(c, d) \leftarrow \Com_h(\R.x; \$)$ and then outputs $c$. That is, $\T_{\textsf{com}}$ outputs a fresh commitment to $\R.x$.
\begin{claim}
\label{claim:comm:perfhiding:fresh}
There is no demonstrable $\V$ which entails $\T_{\textsf{com},\$}$ with respect to $\E_{\secret}$.
\end{claim}

\begin{proof}
Corollary of Theorem~\ref{thm:random}.
\end{proof}

Suppose instead that for a fixed string $r$, we consider the target action $\T_{\textsf{com},r}$ that first sets $(c, d) \leftarrow \Com_h(\R.x; r)$, and then outputs $c$.  This effectively ``derandomizes'' the $\T_{\textsf{com},\$}$ above.
\begin{claim}
\label{claim:comm:perfhiding:fixed}
If $\forall (c',d') \in \mathsf{Image}(\Com)$, $\exists x'$, $\forall d''$: 
$$(c', d'') \ne \Com(x', r),$$
then there is no demonstrable $\V$ which entails $\T_{\textsf{com},r}$ with respect to $\E_{\secret}$.
\end{claim}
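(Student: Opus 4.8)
The plan is to argue by contradiction, in the same information-theoretic spirit as Theorem~\ref{thm:languagesofmind}: because $\E_{\secret}$ constrains $\R.x$ in no way, a demonstrable $\V$ is forced to accept an action whose transcript carries no information about $\R.x$, so no post-processor reading that transcript can reconstruct a value that genuinely depends on $\R.x$ --- here, the commitment $c$ of $\R.x$ under the fixed randomness $r$. Suppose $\V$ is demonstrable with exemplar $\xA$ and entails $\T_{\textsf{com},r}$ with post-processor $\P$, all with respect to $\E_{\secret}$. First I would extract from the hypothesis two messages with distinct commitments: pick any $x_0$, let $(c_0,d_0)=\Com(x_0;r)\in\mathsf{Image}(\Com)$, and apply the hypothesis to get $x_1$ with $(c_0,d'')\neq\Com(x_1;r)$ for all $d''$; writing $\Com(x_1;r)=(c_1,d_1)$ and taking $d''=d_1$ forces $c_1\neq c_0$. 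Let $\R_0,\R_1$ be the deterministic respondents with secrets $x_0,x_1$; for any fixed $\N$, both $(\R_0,\N)$ and $(\R_1,\N)$ are $\E_{\secret}$-consistent, since $\E_{\secret}$ relates $\R$ and $\N$ in no way.

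Next I would build a lying action $\widehat\A$ that internally simulates $\xA$ but answers every read of $\R.x$ with the hardcoded string $x_0$, forwarding unchanged all of $\xA$'s interactions with $\N$ and all of its messages to $\V$. Because $\R_0.x=x_0$, the execution $\exec{\V^\N}{\widehat\A^{\N,\R_1}}$ produces exactly the same transcript $\trans$ and end-state of nature $\N'$ as $\exec{\V^\N}{\xA^{\N,\R_0}}$; and since $\V$ accepts $\xA$ with respect to $(\R_0,\N)$ by demonstrability, $\V$ also accepts $\widehat\A$ with respect to $(\R_1,\N)$, so $\widehat\A$ is $\V$-conforming there. (The exemplar $\xA$ is trivially $\V$-conforming with respect to $(\R_0,\N)$.)

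Finally, entailment applied to $(\R_0,\N)$ with the conforming action $\xA$ gives $\P^{\N'}(\trans)\equiv\T_{\textsf{com},r}^{\N,\R_0}()=c_0$, while entailment applied to $(\R_1,\N)$ with the conforming action $\widehat\A$ gives $\P^{\N'}(\trans)\equiv\T_{\textsf{com},r}^{\N,\R_1}()=c_1$ --- with the very same $\trans$ and $\N'$, by the previous paragraph. Taking all machines deterministic (or fixing all coins, per the definition of $\equiv$), $\P^{\N'}(\trans)$ is a single fixed string, forcing $c_0=c_1$ and contradicting $c_1\neq c_0$. As in Theorem~\ref{thm:languagesofmind}, this uses no computational assumption, only the uncertainty of $\R.x$.

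The crux is the middle step --- defining $\widehat\A$ and justifying that it $\V$-conforms with respect to a respondent \emph{other} than the one whose secret it hardcodes. This relies on two structural features of the model, that $\V$ interacts with $\R$ only through $\A$ and that $\E_{\secret}$ leaves $\R.x$ wholly unconstrained, and one must check carefully that the transcript $\trans$ and the post-interaction nature $\N'$ are literally identical across the two executions so that $\P$ receives exactly the same input in both. Extracting $x_0,x_1$ from the hypothesis is routine once the quantifiers are unpacked.
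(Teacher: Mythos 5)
Your proof is correct, but it takes a different (more self-contained) route than the paper, whose entire proof of this claim is the single line ``Corollary of Theorem~\ref{thm:languagesofmind}.'' The intended derivation is to observe that $\T_{\textsf{com},r}$ outputs a deterministic function of the respondent's secret, namely the first component of $\Com(\R.x;r)$, and to instantiate the theorem with the singleton languages $L_{x'}=\{\pi_1(\Com(x';r))\}$: the claim's hypothesis (every $c'$ in the image is missed by some $x'$) is then exactly the empty-intersection condition $\bigcap_{\R} L_{\R.z}=\emptyset$, and the theorem applies since $\E_{\secret}$ leaves $\R.x$ unconstrained. What you have done instead is re-prove that theorem inline, specialized to commitments: your lying action $\widehat\A$ is precisely the hardcoded-respondent action $\A_0$ from the paper's proof of Theorem~\ref{thm:languagesofmind} in Appendix~\ref{app:proof:languagesofmind}, your transcript-independence argument matches theirs, and your quantifier unpacking to extract $x_0,x_1$ with $c_0\neq c_1$ is the concrete form of ``no single value lies in every $L_{\R.z}$.'' Your version has the virtue of making the mechanism explicit and of correctly isolating the crux (that $\trans$ and $\N'$ are literally identical across the two executions because $\R$ is reachable only through $\A$); the paper's version buys brevity and makes clear that compelled commitment is just one instance of a general impossibility of entailing outputs the government cannot characterize. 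One small stylistic simplification available to you: since $\widehat\A$ is $\V$-conforming with respect to \emph{both} $(\R_0,\N)$ and $(\R_1,\N)$, you can apply entailment to $\widehat\A$ in both cases rather than mixing $\xA$ and $\widehat\A$, which is how the paper's general proof is organized.
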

\begin{proof}Corollary of Theorem~\ref{thm:languagesofmind}.
\end{proof}
\fi

\iffull\else
\subsection{Two factor authentication, encryption, and commitment}
Due to space constraints, we defer details to the full version \cite{fullversion}.

\subsubsection{Compelled two factor authentication}

Many services and devices lock via \emph{two-factor authentication} (2FA) in which two ``forms'' of authentication are required.  A typical setting is for a service to require entering a passcode and entering an ephemeral code that the service sent to a second device controlled by the same individual.  
We extend the example in Section~\ref{sec:entailment:decryption} and Figure~\ref{algs:enter-a-password} of decrypting a device by entering a password to require a code sent to a secondary device -- the location of which may not be known to the government, although they must know the respondent can access it. 
We find that the natural analogues of $\V_\enterPwd$ and $\T_\enterPwd$ are demonstrable and entailable, respectively, under the appropriate evidence.

\subsubsection{Compelled encryption}

Suppose the government knows that the respondent has a secret value $R.x$ and secret key $R.k$, but has no knowledge of what they are and no way to verify them.  

\secretE

Under this evidence, there is no way of entailing the target action $\T = \OTP(k', \R.x)$ where $\OTP$ is the one-time pad encryption scheme. This holds regardless of the value of $k'$: whether $R.k$, random, or chosen by the government.
Even if the government knows the secret $R.x$ and key $R.k$, it is unable to entail the action $\T = \Enc(\R.k, \R.x)$ for a randomized encryption scheme $\Enc$. 

Along the way, we prove the following general result about entailing distributions. Roughly, is impossible to entail actions that result in a distribution using the random coins of the action itself.

\begin{theorem}\label{thm:random}
Let $\E$ be some evidence, and 
let $\T_{\rand}$ be some target action.
Suppose that $\T_{\rand}$ uses its own randomness in a non-trivial way: specifically, there exists $\E$-consistent $\N,\R$ with the property that at least one fixed setting of the random tapes of $\N,\R$ have the property that $\vert \textsf{Support}(\T_{\rand}^{\N,\R}) \vert \ge 2$.
Then, there is no demonstrable $\V$ that entails $\T_{\rand}$.
\end{theorem}

\subsubsection{Commitment}

We establish two claims which together show that there is no way to entail a commitment of an unknown secret.
Let $(\Com, \Check)$ be a 
string commitment scheme as defined in Defn.~\ref{defn:commitment}.
Recall $\E_{\secret}$ above.
Suppose $\T_{\textsf{com},\$}$ is the action which first sets $(c, d) \leftarrow \Com_h(\R.x; \$)$ and then outputs $c$. That is, $\T_{\textsf{com}}$ outputs a fresh commitment to $\R.x$.
\begin{claim}
\label{claim:comm:perfhiding:fresh}
There is no demonstrable $\V$ which entails $\T_{\textsf{com},\$}$ with respect to $\E_{\secret}$.
\end{claim}

\begin{proof}
Corollary of Theorem~\ref{thm:random}.
\end{proof}

Suppose instead that for a fixed string $r$, we consider the target action $\T_{\textsf{com},r}$ that first sets $(c, d) \leftarrow \Com_h(\R.x; r)$, and then outputs $c$.  This effectively ``derandomizes'' the $\T_{\textsf{com},\$}$ above.
\begin{claim}
\label{claim:comm:perfhiding:fixed}
If $\forall (c',d') \in \mathsf{Image}(\Com)$, $\exists x'$, $\forall d''$: 
$(c', d'') \ne \Com(x', r),$
then there is no demonstrable $\V$ which entails $\T_{\textsf{com},r}$ with respect to $\E_{\secret}$.
\end{claim}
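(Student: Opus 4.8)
The plan is to obtain this as a corollary of Theorem~\ref{thm:languagesofmind}, in the same way that the companion Claim~\ref{claim:comm:perfhiding:fresh} is a corollary of Theorem~\ref{thm:random}. The difference between the two claims is instructive: fixing the randomness to $r$ strips $\T_{\textsf{com},r}$ of all entropy \emph{except} that of the respondent's secret. Writing $g(x)$ for the commitment component of $\Com(x;r)$, the target action $\T_{\textsf{com},r}^{\N,\R}$ deterministically outputs $g(\R.x)$. Since $\E_{\secret}$ imposes no constraint on $\R.x$ and gives the government no independent way to check which value $g(\R.x)$ takes, this is exactly the ``unknown goal'' situation that Theorem~\ref{thm:languagesofmind} rules out, and the claim's hypothesis is precisely the condition needed to make that theorem's ``empty intersection'' hypothesis hold.

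Concretely, I would instantiate Theorem~\ref{thm:languagesofmind} with the language family $L_{z'} := \{\, g(z')\,\}$, viewing $\E_{\secret}$ as a refinement of $\E_{\lang}$ by taking $\E_{\lang}$'s placeholder secret to be $\E_{\secret}$'s variable $\R.x$ and $\E_{\lang}$'s placeholder method to be the ($\R$-computable) map $x \mapsto g(x)$ applied to that secret. Under this identification $\T_{\lang}$ is literally $\T_{\textsf{com},r}$, and $\E_{\lang}$'s assertion ``$\R.x() \in L_{\R.z}$'' reads $g(\R.x) \in \{g(\R.x)\}$ and is therefore vacuous, so every $\E_{\secret}$-consistent world induces an $\E_{\lang}$-consistent one; that is, $\E_{\secret} \succeq \E_{\lang}$. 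Reconciling the fact that $\T_{\textsf{com},r}$ is not syntactically ``return $\R.x()$'' but rather ``return a fixed function of the respondent's secret'' is the step I expect to need the most care: it amounts either to reading $\E_{\lang}/\T_{\lang}$ liberally (with $\R.x()$ allowed to be any efficiently $\R$-computable value) or to noting that the proof of Theorem~\ref{thm:languagesofmind} in Appendix~\ref{app:proof:languagesofmind} goes through verbatim when $\T_{\lang}$ is replaced by a deterministic, efficiently $\R$-computable function of $\R.z$.

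It then remains to verify the hypotheses of Theorem~\ref{thm:languagesofmind} for $\E' = \E_{\secret}$. Nonemptiness of $\mathfrak{R}_{\N,\E_{\secret}}$ for some (say trivial) $\N$ is immediate since $\E_{\secret}$ is unconstrained, and $\R.z = \R.x$ then ranges over all messages. For the intersection, I would first unwind the claim's hypothesis: because $\Com(x',r)$ is a pair, ``$\forall d'' \colon (c',d'') \ne \Com(x',r)$'' is equivalent to ``$g(x') \ne c'$'', so the hypothesis says exactly that for every $c'$ in the image of $\Com$ there is a message $x'$ with $g(x') \ne c'$ --- in other words, $g$ is non-constant. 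Hence no single string lies in $L_x = \{g(x)\}$ for every message $x$, so $\bigcap_{\R \in \mathfrak{R}_{\N,\E_{\secret}}} L_{\R.z} = \emptyset$. Theorem~\ref{thm:languagesofmind} then gives that no demonstrable $\V$ entails $\T_{\lang} = \T_{\textsf{com},r}$ with respect to $\E_{\secret}$, which is the claim.
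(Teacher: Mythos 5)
Your proposal is correct and matches the paper's approach: the paper's entire proof of this claim is the single line ``Corollary of Theorem~\ref{thm:languagesofmind},'' and your instantiation (singleton languages $L_{z'}=\{g(z')\}$ for the commitment component $g(\cdot)=\Com(\cdot;r)$, plus the observation that the claim's hypothesis is exactly non-constancy of $g$ and hence emptiness of the intersection) supplies precisely the details the paper leaves implicit. Your flagged concern about $\T_{\textsf{com},r}$ not being syntactically ``return $\R.x()$'' is handled correctly by your second option --- the proof in Appendix~\ref{app:proof:languagesofmind} goes through verbatim for any deterministic $\R$-computable function of the secret.
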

\begin{proof}Corollary of Theorem~\ref{thm:languagesofmind}.
\end{proof}

\fi
\iffull

    \section{Proofs of Entailment Theorems}
    \label{app:proofs}
    
    \subsection{Proof of Theorem \ref{thm:languagesofmind}}
    \label{app:proof:languagesofmind}

\else

    \section{Proof of Theorem \ref{thm:languagesofmind}}
    \label{app:proof:languagesofmind}

\fi

\begin{proof}
Suppose for contradiction $\exists\E'\succeq \E_\lang$, $\exists\N$ as in the hypothesis, and $\exists$ demonstrable $\V$ entailing $\T_\lang$.
Let $\xA$ be the exemplar action guaranteed by 
demonstrability, and let $\P$ be the post-processor guaranteed by 
entailment.

We will construct $\R^*$ and $\A_0$ that violate the definition of entailment. We will have to show four things. First, $(\N,\R^*)$ are $\E'$-consistent. Second, that $\A_0$ is $\V$-conforming with respect to $(\N,\R^*)$. Third, that there is some set $L^*$ such that  $\T^{\N,\R^*}\in L^*$. Fourth, that with with non-zero probability $\P^{\N'}(\tau) = x^* \notin L^*$. 
This contradicts our hypothesis, proving the theorem.

Let $\R_0\in \mathfrak{R}_{\N,\E'}$.
Consider $\A_0$ the exemplar action corresponding to $\V$ with $\R_0$ hardcoded. That is, $\A_0$ emulates $\xA$ but replaces any messages to $\R$ with a message to an emulated $\R_0$.
By the definition of demonstrability, $\exec{\V^\N}{\A_0^{\N,\R}} = \exec{\V^\N}{\xA^{\N,\R_0}}$ returns $\Accept$ for all $\R$. Hence $\A_0$ is $\V$-conforming for all $\R \in \mathfrak{R}_{\N,\E'}$.

Next we show that the distribution of $\P^{\N'}(\tau)$ is independent of $\R$ when $\A = \A_0$.
For any $\R$, the execution $\exec{\V^\N}{\A_0^{\N,\R}} = \exec{\V^\N}{\xA^{\N,\R_0}}$ is completely independent of $\R$. This is because $\R$ cannot interact with $\N$ or $\V$ directly, and $\A_0$ doesn't communicate with $\R$ by construction.
Hence, the resulting state of nature $\N'$ and transcript $\tau$ are independent of $\R$, and hence $\P^{\N'}(\tau)$ is too.

Consider $x^* \in \mathsf{Support}(\P^{\N'}(\tau))$.
By the theorem's hypothesis, $\exists \R^* \in \mathfrak{R}_{\N,\E'}$ such that $x^*\notin L^*$, where $L^* \triangleq L_{\R^*.z}$. But $\E'$  implies that $\T^{\N,\R^*}\in L^*$.

$\R^*$ and $\A_0$ violate the definition of entailment as required, completing the proof by contradiction.
\end{proof}

\iffull
\subsection{Proof of Theorem \ref{thm:random}}
\label{app:proof:random}

\begin{proof}
Suppose by way of contradiction that there exists demonstrable $\V$ that entails $\T_{\rand}$.
Let $\xA$ be the exemplar action implied by demonstrability.
Let $\A_0$ be $\xA$ with the all-zeros randomness tape hardcoded. Because $\V$ is demonstrable,  $\V$ must accept even when $\xA$ is called with the all-zeros randomness tape, thus we know $\A_0$ is also $\E$-conforming.

Let $\N, \R$ have the property from the theorem statement, that is, there exists at least one setting of their randomness tapes for which $\vert \textsf{Support}(\T_{\rand}^{\N,\R}) \vert \ge 2$.  Fix their randomness tapes to the first such setting, we use $\N_0$ and $\R_0$ as shorthand for calling them with this specific randomness.

Now consider the entailment equation when considered with these fixed random tapes, that is, we require that for all settings of $\P$ and $\T_{\rand}$'s randomness,
$\P^{\N_0'}(\exec{\V^{\N_0}}{\A_0^{\R_0,\N_0}}) \equiv \T_{\rand}^{\R_0, \N_0}()$.
(Note that there is no issue arising from the fact that the left hand side uses state $\N_0'$ after running $\exec{\V^{\N_0}}{\A_0^{\R_0,\N_0}}$ rather than the original state $\N$; this argument will rely only on the fact that $\N_0$ is deterministic, not any other property of $\N_0$.)
By assumption, the right hand side has support size at least 2.

If $\P$ is deterministic, the left hand side has support size 1.  So for some setting of $\T_{\rand}$'s randomness, the two sides are not equal, which is a contradiction.
If $\P$ is randomized, then fix $\P_0$ as $\P$ called with the all-zeros random string.  Because our definition of entailment requires that the two sides be equal for all possible settings of $\P$ and $\T_{\rand}$'s randomness tape, the two sides must still be equal even for $\P_0$.  However, once again, the left hand side now has support size 1, and the right hand side has support size at least 2.  So there is at least one setting of $\T_{\rand}$'s randomness for which the two sides are not equal, which is a contradiction.

Thus, there is no demonstrable $\V$ that entails $\T_{\rand}$.
\end{proof}
\fi

\end{document}